\documentclass[12pt]{article}
\usepackage{amssymb}
\usepackage{amsfonts}
\usepackage{amsmath}
\usepackage{hyperref}

\usepackage{epsf}
\usepackage{epsfig}

\setcounter{MaxMatrixCols}{10}

\newtheorem{theorem}{Theorem}
\newtheorem{claim}{Claim}
\newtheorem{corollary}{Corollary}
\newtheorem{proposition}{Proposition}
\newtheorem{lemma}{Lemma}
\newtheorem{remark}{Remark}

\newcommand{\E}{\mathbb E}
\newcommand{\Q}{\mathbb Q}

\newcommand{\be}{\begin{equation}}
\newcommand{\ee}{\end{equation}}

\newenvironment{proof}[1][Proof]{\noindent\textbf{#1.} }{\ \rule{0.5em}{0.5em}}

\usepackage{graphicx}
\graphicspath{{converted_graphics/}}
\begin{document}

\title{On the RND under Heston's stochastic volatility model}
\author{Ben Boukai \\
Department of Mathematical Sciences, IUPUI \\
Indianapolis, IN 46202 , USA}
\maketitle

\begin{abstract}

\noindent We consider Heston's (1993) stochastic volatility model for valuation of  European options to which (semi) closed form solutions are available and are given in terms of characteristic functions. We prove that the class of scale-parameter distributions with mean being the forward spot price satisfies Heston's solution. Thus, we show that any member  of this class  could be used for the direct risk-neutral valuation of the option price under Heston's SV model. In fact, we also show that any RND with mean being the forward spot price that satisfies Hestons' option valuation solution, must be a member of a scale-family of distributions in that mean. As particular examples, we show that one-parameter versions of  the {\it Log-Normal, Inverse-Gaussian, Gamma, Weibull} and the {\it Inverse-Weibull} distributions  are all members of this class and thus provide explicit risk-neutral  densities (RND) for Heston's pricing model. We demonstrate, via exact calculations and Monte-Carlo simulations,  the applicability and suitability of these explicit RNDs using already published Index data with a calibrated Heston model ({\tt S\&P500}, Bakshi,  Cao and Chen (1997), and  {\tt ODAX,} Mrázek and Pospíšil (2017)), as well as current option market data ({\tt AMD}).

\bigskip

\textit{Keywords}: Heston model, option pricing, risk-neutral valuation, calibration.
\end{abstract}

\section{Introduction }

The stochastic volatility model for option valuation of Heston (1993) is widely accepted nowadays by both, academics and practitioners. It prescribes, under a risk-neutral probability measure $\Q$, say, the dynamics of the spot's (stock, index) price process $S=\{S_t, \, t\geq 0\}$, in relation to a corresponding, though unobservable (untradable ) volatility process $V=\{V_t, \, t\geq 0\}$  via a system of stochastic deferential equations. This system is given by
\be\label{1}
\begin{aligned}
dS_t= & rS_tdt +\sqrt{V_t}S_t dW_{1,t}\\
dV_t= & \kappa(\theta-V_t)+\eta\sqrt{V_t}dW_{2,t}, 
\end{aligned}
\ee
where $r$ is the risk-free interest rate,  $\kappa, \ \theta$ and $\eta$ are some constants (to be discussed below) and where  $W_{1}=\{W_{1,t}, \, t\geq 0\}$ and  $W_{2}=\{W_{2,t}, \, t\geq 0\}$ are two Brownian motion processes under $\Q$ with $d(W_{1}W_{2})=\rho dt$.  

The quest to incorporate a non-constant volatility in the option valuation model, has risen in the literature  (e.g., Wiggins (1987) or Stein and Stein (1991)) ever since the seminal work of Black and Scholes (1973)  and of Merton (1973), (abbreviated here as the BSM) in modeling the price of a European call option when the spot's price  was assumed to evolve, with a constant volatility of the spot's returns, $\sigma$,   as a geometric Brownian motion,
\be\label{2}
dS_t=  rS_tdt +\sigma S_t dW_{1,t}. 
\ee
Coupled with an ingenious argument of instantaneous portfolio hedging (along with other  assumptions such as self-financing, no-cost trading/carry, etc.) and an  application of Ito's Lemma to the underlying PDE, the BSM model provides an exact solution for the price of an European call option $C(\cdot)$.  Specifically, given the {\it current} spot price $S_\tau=S$ and the risk-free interest rate $r$, the price of the corresponding call option with price-strike $K$ and duration $T$, 
\be\label{3}
C_{S}(K)= S\, \Phi(d_1)-K\,  e^{-rt}\, \Phi(d_2),  
\ee
where $t=T-\tau$ is the {\it remaining} time to expiry. Here, using the conventional notation, $\Phi(\cdot)$ and $\phi(\cdot)$ denote the standard Normal $cdf$   and  $pdf$, respectively,  and
\be\label{4}
d_1:=\frac{-\log(\frac{K}{S})+(r+\frac{\sigma^2}{2})t}{\sigma\sqrt{t}} \qquad \text{and} \qquad d_2:=d_1(k)-\sigma\sqrt{t}.  
\ee

In similarity to the {\it form} of the  BSM solution in (\ref{3}), Heston (1993)  obtained  that the solution to the  system of PDE resulting from the stochastic volatility model, (\ref{1}), is given by
\be\label{5}
C_{S}(K)= S\,  P_1-K\,  e^{-rt}\,  P_2,  
\ee
where $P_j$ $j=1,2$, are two related (under a risk-neutral probability measure $\Q$)  conditional probabilities that the option will expire in-the-money, conditional on the given current stock price $S_\tau=S$ and the current volatility, $V_\tau=V_0$.  However, unlike the explicit BSM solution in (\ref{3}) which is given in terms of the normal (or log-normal) distribution, Heston (1993) provided (semi) closed-form solutions to these two probabilities, $P_1$ and $P_2$ in terms of their characteristic functions  (for more details, see the Appendix). Hence, $C_{S}(K)$ in (\ref{5}) is readily computable, via complex integration,  for any choice of the parameters $\vartheta=(\kappa, \theta, \eta, \rho)$ in (\ref{1}), all in addition to $S, \ V_0$ and $r$.  These parameters have particular meaning in context of the SV  model (\ref{1}): $r$ is the prevailing risk-free interest rate; $\rho$ is the correlation between the two Brownian motions comprising it; $\theta$ is the long-run average, $\kappa$ is the mean-reversion speed and $\eta^2$ is the variance of the volatility $V$ (see also Section 4 below). It should be noted that different choices of $\vartheta$ will lead to different {\it values} $C_{S}(K)$ in (\ref{5}) and hence, the value $\vartheta=(\kappa, \theta, \eta, \rho)$ must be appropriately {\it `calibrated'} first for $C_{S}(K)$ to actually match the option market data.

The role of the risk-neutral probability measure $\Q$ in option valuation in general and in determining the specific solution in (\ref{5})  (or in (\ref{3})) in particular, cannot be overstated (in the `risk-neutral' world).  As was established by Cox and Ross (1976), the risk-neutral equilibrium requires that for $T>\tau$ (with $t=T-\tau$), 
\be\label{6}
\begin{aligned}
\E (S_T| S_\tau) =  &  \int S_T\,  d\Q(S_T)\\
= &  \int S_T\cdot q(S_T)d S_T= S_\tau e^{rt}
\end{aligned}
\ee
and that (in the case of a European call option), $C_{S_\tau}(K)$ must alo satisfy 
\be\label{7}
\begin{aligned}
C_{S_\tau}(K)= & e^{-rt}\E \left( \max(S_T-K, 0)\, \, | S_\tau \right) \\
= & e^{-rt} \int (S_T-K)^{+}\,  d\Q(S_T)\\
= & e^{-rt} \int_K^\infty (S_T-K)\,   q(S_T)dS_T,
\end{aligned}
\ee
where,  for any $x\in {\Bbb R}$,  $x^+:=max(x, 0)$. Here $q(\cdot)$ is the risk-neutral density (RND) under $\Q$, reflective of the conditional distribution of the spot price $S_T$ at time $T$, given the spot price, $S_\tau$ at time $\tau<T$.  The risk-neutral probability $\Q$ links together the option evaluation and the distribution of spot price $S_T$ and its stochastic dynamics governing the model (as in (\ref{1}), say). As was mentioned earlier, in the case of the BSM in (\ref{2}) the RND is unique and is given by the log-normal distribution. However, since the Heston (1993) model involves the dynamics of two stochastic processes, one of which (the volatility, V) is untradable and hence not directly observable, there are innumerable many choices of RNDs,  $q(\cdot)$, that would satisfy (\ref{6})-(\ref{7}) and hence, the general solutions of $P_1$ an $P_2$ in (\ref{5}) by means of characteristic functions (per each choice of $\vartheta=( \kappa, \theta, \eta, \rho)$).

Needless to say, there is an extensive body of literature dealing with the {\it `extraction', `recovery', `estimation'} or {\it  `approximation'}, in parametric or non-parametric frameworks,  of the RND, $q(\cdot)$ from the available (market) option prices; see  Jackwerth (2004), Figlewski (2010), Grith and Krätschmer  (2012)  and  Figlewski (2018),  for comprehensive reviews of the subject. With the parametric approach in particular, one strives to estimate by various means (maximum likelihood, method of moments, least squares, etc.)  the parameters of some {\it assumed}  distribution so as to approximate  available option data or implied volatilities (c.f. Jackwerth and Rubinstein (1996)). This type of {\it assumed} multi-parameter distributions includes some mixtures of log-normal distribution (Mizrach (2010), Grith and Krätschmer (2012)),  generalized gamma (Grith and Krätschmer  (2012)), generalized extreme value (Figlewski (2010)), the gamma and the Weibull distributions (Savickas (2005)), among  others. While empirical considerations have often led to suggesting these parametric distributions as possible RNDs, the motivation to these considerations did not include direct link to the governing pricing model and it dynamics, as was the case in the  BSM model, linking directly the log-normal distribution and the price dynamics of model (\ref{2}).

In this paper, we present in our Theorem \ref{T2} a more direct approach  (and hence,  a link)  to the RND quest in the case of Heston's (1993)  SV model (\ref{1}).   By expanding the last term in (\ref{7}), with $S_\tau=S$, we obtain, 
\be\label{7a}
C_{S}(K)=  e^{-rt} \int_K^\infty S_T\cdot q(S_T)dS_T-Ke^{-rt}\Q(S_T>K). 
\ee
Clearly, by comparing (\ref{5}) to (\ref{7a}), it follows that $P_2= \Q(S_T>K)\equiv 1-Q(K)$ (the risk-neutral probability of the option expiring in the money), whereas by, (\ref{6}) and (\ref{5}),
\be\label{7b}
P_1\equiv \int_K^\infty\frac{S_T}{Se^{rt}}\cdot q(S_T)dS_T=  \int_K^\infty\frac{S_T}{\E (S_T| S_\tau)}\cdot q(S_T)dS_T. 
\ee
We note that since by (\ref{6}),  we have, 
$$
\int_0^\infty\frac{S_T}{\E (S_T| S_\tau)}\cdot q(S_T)dS_T=1,
$$
the  probability $P_1$ is also being interpreted  (see for example xxxx)  as the probability of the option expiring in the money, but under the so-called physical probability measure that is being dominated by $\Q$. However here, in the case of of the Heston's (1993) model, we consider a different interpretation of this term $P_1$, which enables us to characterize a class of RND candidates that   satisfy (\ref{5}).

It is a standard notation to denote by  $\Delta(K)$  the so-call {\it delta} function (or hedging fraction) in the option valuation, as defined by
\be\label{9}
\Delta(K)=\frac{\partial C_{S}(K)}{\partial S}. 
\ee
In the Appendix, we show that for  Heston's call option price $C_S(K)$ as given in (\ref{5}), one has (see also  Bakshi,  Cao and Chen (1997)), 
\be\label{9a}
P_1\equiv \Delta(K). 
\ee
  Hence,  under model (\ref{1}), Heston's solution for the option price in (\ref{5}) can be written in an equivalent form as:
\be\label{8}
C_{S}(K) \equiv  S\cdot \Delta(K)-K\,  e^{-rt}\, \cdot (1-Q(K)).
\ee
We point out in passing that this presentation (\ref{8}) also trivially applies to the BSM option price in (\ref{3}) since in that case,  $\Phi(d_1)\equiv \Delta(K)$ and $\Phi(d_2)$ is just the probability that option will expire in the money (calculated under the log-normal distribution).  

In Section 2, we identify the class of distributions (and hence of RNDs) that admit the presentation in (\ref{8}) of Heston's (1993) option price as as is given in (\ref{5}). Specifically,  we show that any risk-neutral probability distribution  $Q$ that satisfies  (\ref{6})-(\ref{7}) with a {\it scale} parameter $\mu=S \cdot e^{rt}$ would admit the presentation in (\ref{8}) and hence would satisfy Hestons' (1993) option pricing model in (\ref{5}) . In fact, we also show in the Appendix that the RNDs that may be calculated directly from Heston's characteristic functions (corresponding to $P_1$ and $P_2$) are members of this class of distributions as well. In Theorem \ref{T2} below we establish the direct link, through Heston's (1993) solution in (\ref{5}) (or  (\ref{8})) between this class of RNDs and the assumed stochastic volatility model in (\ref{1}) governing the spot price dynamics. 

In Section 3 we provide some specific examples of well known distributions that satisfy (\ref{8}). These include the {\it Gamma, Inverse Gaussian,  Log-Normal}, the {\it Weibull} and the {\it Inverse Weibull} distributions (all with a particular parametrization) as possible RND solutions for  option valuation under Heston's  SV model. The extent agreement between each of these five particular distributions as a possible RND for the Heston's model, and the actual Heston's RND (calculated from $P_2$ in the Appendix) and the simulated distribution of the spot prices obtained under (a discretized version of) model (\ref{1}) is illustrated numerically in Section 4.  We demonstrated  the applicability and suitability of these explicit RNDs using already published Index data with a calibrated Heston model ({\tt S\&P500}, Bakshi,  Cao and Chen (1997), and  {\tt ODAX,} Mrázek and Pospíšil (2017)), as well as current option market data ({\tt AMD}).   In the Appendix, we present the expressions for for Heston's characteristic functions and discuss some of the immediate properties leading to our main results as stated in Theorem \ref{T2}.

\section{The scale parameter class of Heston's RND}
   
In this section we identify the class of distributions, and therefore of possible RNDs that admit the presentation in (\ref{8}) for the price of a European call option.  Specifically,  we show that any RND candidate that satisfies (\ref{6})-(\ref{7}) with a {\it scale} parameter $\mu=S \cdot e^{rt}$ would admit the presentation in (\ref{8}) and hence in light of the result in (\ref{9a})  (see Claim 7 in the Appendix), would equivalently satisfy Hestons' (1993) option pricing model in (\ref{5}).

\medskip

To that end and to simplify the presentation,   we consider a continuous positive random variable  $X$ with mean $\mu>0$ (with respect to some underlying probability measure $\Q$).   We denote by $Q_\mu(\cdot)$ and $q_\mu(\cdot)$ the $cdf$ and $pdf$ of $X$, respectively, to emphasize their dependency on $\mu$, as a parameter.  Similarly, for a given $\mu>0$, we write $E_\mu(\cdot)$ for the expectation of $X$ (or functions thereof) under $Q_\mu$ so that,
$$
\mu: =  E_\mu(X)= \int_{0}^\infty xq_\mu(x)dx \equiv  \int_{0}^\infty (1-Q_\mu(x))dx. 
$$
In similarity to (\ref{7}), we define, for each $s\geq 0$,  
$$c_\mu(s):= E_\mu[ (X-s)^+].$$
Clearly, $c_\mu(0)=\mu$.  Note that $c_\mu(\cdot)$ is merely the {\it undiscounted} version of $C_S(\cdot)$ in (\ref{7}), so that with $\mu=S \cdot e^{rt}$ as in (\ref{6}), we have, $c_\mu(K)\equiv e^{rt}\cdot C_S(K)$. 

It is straightforward to see that, as in (\ref{7a}), 
\be\label{11}
c_\mu(s)=\int_s^\infty (x-s)q_\mu(x)\, dx =\int_s^\infty xq_\mu(x)\, d\,x- s(1-Q_\mu(s)), 
\ee
or equivalently, 
\be\label{12a}
c_\mu(s)\equiv  \int_s^\infty (1-Q_\mu(x))\, dx.  
\ee
Hence, it follows immediately from (\ref{12a}) that for each $s>0$, 
\be\label{13}
c^\prime_\mu(s):= \frac{\partial c_\mu(s)}{\partial s}=-(1-Q_\mu(s)).
\ee
As we proceed to explore more of the basic properties of the function $c_\mu(\cdot)$, we add the simple assumption that $\mu$ is a scale-parameter of the underlying distribution of $X$. 
  \medskip
 
 \noindent \underbar{{\bf {Assumption A:}}} {\it We assume that ${\cal{Q}}:=\{Q_\mu,\  \mu>0\}$ is a {\it scale-family} of  distributions (under $\Q$), so that for any given 
 $\mu>0$
 $$
 Q_\mu(x)\equiv Q_1(x/\mu) \qquad \text{and}\qquad q_\mu(x)\equiv \frac{1}{\mu}q_1(x/\mu), \ \forall x>0,
 $$
 for some $cdf$ $Q_1(\cdot)$ with a $pdf$ $q_1(\cdot)$ satisfying   $\int_0^\infty x q_1(x)dx=1$ and \\ $ \int_0^\infty x^2 q_1(x)dx<\infty$.}

\medskip

In Lemma  \ref{L0} below we establish the linear homogeneity of  $c_\mu(s)$ under {\it Assumption A} and  provide  in Lemma \ref{L1} the implied re-scaling property  of this function   and the consequential specific  derived  form of $c_\mu(s)$ as presented in Theorem \ref{T1} below.   For the linear homogeneity property of the European options, in general,  see Theorems 6 \& 9 of Merton (1973) or Theorem 2.3 in Jiang (2005).

\begin{lemma} \label{L0} Suppose that  $Q_\mu\in {\cal Q}$ and thus it satisfies the conditions of  {\it Assumption A}, then the function $c_\mu(s)$ as defined in (\ref{12a}) (or  (\ref{11}) is homogeneous of degree one in $s$ and in $\mu$. That is, for $s^\prime = \alpha\, s$ and $\mu^\prime=\alpha\,  \mu$ with $\alpha >0$, we have $c_{\mu^\prime}(s^\prime) \equiv \alpha\, c_\mu(s)$. 
\end{lemma}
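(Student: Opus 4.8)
The plan is to use the explicit integral representation of $c_\mu(s)$ together with the scaling relation imposed by \emph{Assumption A}, and then just change variables in the integral. So the first step is to write down the starting point: by (\ref{12a}) we have
\be
c_{\mu'}(s') = \int_{s'}^\infty \bigl(1 - Q_{\mu'}(x)\bigr)\, dx,
\ee
where $s' = \alpha s$ and $\mu' = \alpha\mu$ with $\alpha > 0$.

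The second step is to substitute the scale-family identity $Q_{\mu'}(x) = Q_1(x/\mu') = Q_1\bigl(x/(\alpha\mu)\bigr)$ from \emph{Assumption A} into the integrand, and then perform the substitution $x = \alpha y$ (so $dx = \alpha\, dy$), which transforms the lower limit $s' = \alpha s$ back to $y = s$:
\be
c_{\mu'}(s') = \int_{\alpha s}^\infty \Bigl(1 - Q_1\bigl(\tfrac{x}{\alpha\mu}\bigr)\Bigr)\, dx = \int_{s}^\infty \Bigl(1 - Q_1\bigl(\tfrac{y}{\mu}\bigr)\Bigr)\, \alpha\, dy = \alpha \int_s^\infty \bigl(1 - Q_\mu(y)\bigr)\, dy = \alpha\, c_\mu(s),
\ee
using $Q_1(y/\mu) = Q_\mu(y)$ in the last-but-one equality. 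This establishes $c_{\mu'}(s') \equiv \alpha\, c_\mu(s)$, which is exactly homogeneity of degree one jointly in $s$ and $\mu$.

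I expect no serious obstacle here — the result is essentially a one-line change of variables once \emph{Assumption A} is invoked, and the only things to be careful about are: (i) confirming that the improper integral converges, which is guaranteed by the moment conditions $\int_0^\infty x q_1(x)\,dx = 1 < \infty$ in \emph{Assumption A} (so that $1 - Q_\mu(x)$ is integrable on $[s,\infty)$); and (ii) keeping the roles of $s$ and $\mu$ synchronized under the common scaling factor $\alpha$, since the homogeneity is joint, not separate in each argument. An alternative and equally short route would be to start from the representation $c_\mu(s) = \int_s^\infty x q_\mu(x)\,dx - s(1 - Q_\mu(s))$ in (\ref{11}) and apply the same substitution $x = \alpha y$ together with $q_{\mu'}(x) = \frac{1}{\alpha\mu}q_1(x/(\alpha\mu))$; this would produce the factor $\alpha$ from the Jacobian in the first term and from the explicit $s' = \alpha s$ in the second term, giving the same conclusion. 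I would present the $(\ref{12a})$-based argument as the cleaner one.
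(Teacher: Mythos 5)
Your proof is correct and follows essentially the same route as the paper: both start from the representation (\ref{12a}), invoke \emph{Assumption A} to write $Q_{\mu'}(x)=Q_1(x/(\alpha\mu))$, and perform the change of variables $x=\alpha y$ to pull out the factor $\alpha$ and restore the lower limit to $s$. The additional remarks on integrability and the alternative derivation from (\ref{11}) are fine but not needed beyond what the paper's one-line argument already contains.
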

\begin{proof} With a simple change of variable, it follows  immediately from {\it Assumption A} and (\ref{12a}) that with $s^\prime = \alpha\, s$, $\mu^\prime=\alpha \, \mu$, $\alpha >0$, we have
$$
c_{\mu^\prime}(s^\prime)=  \int_{s^\prime}^\infty (1-Q_{\mu^\prime}(x))\, d\,x\equiv  \alpha\, \int_{s^\prime/\alpha}^\infty (1-Q_\mu(u))\, d\,u=\alpha \, c_\mu(s). 
$$
\hfill \end{proof}
  
\noindent Now, by applying the results of Lemma \ref{L0} with $\alpha=1/\mu$, we immediately obtain the following useful result.  
\begin{lemma} \label{L1} Suppose that  $Q_\mu\in {\cal Q}$ and thus it satisfies the conditions of  {\it Assumption A},  then  it holds that 
\be\label{14}
c_\mu(s)=\mu\,  c_1(s/\mu),
\ee
for any $s>0$, where $c_1$ is  as defined in (\ref{12a}), but with respect to $Q_1$,  
\be\label{15}
c_1(s)=\int_{s}^\infty (1-Q_1(u))du\qquad \text{with} \qquad c^\prime_1(s)=-(1-Q_1(s)). 
\ee
\end{lemma}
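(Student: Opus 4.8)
The plan is to obtain (\ref{14}) as a direct specialization of the homogeneity already proved in Lemma \ref{L0}. That lemma asserts $c_{\alpha\mu}(\alpha s)\equiv \alpha\, c_\mu(s)$ for every $\alpha>0$, so the only thing left to do is to pick $\alpha$ that normalizes the scale parameter on the left-hand side to $1$. Taking $\alpha=1/\mu$ — a legitimate choice since $\mu>0$ is fixed — turns the left-hand side into $c_1(s/\mu)$ and the right-hand side into $(1/\mu)\, c_\mu(s)$; rearranging gives $c_\mu(s)=\mu\, c_1(s/\mu)$, valid for every $s>0$, which is exactly (\ref{14}). (Equivalently, one could re-run the one-line change of variables $u=x/\mu$ directly in the integral (\ref{12a}), which is precisely the computation inside the proof of Lemma \ref{L0}.)

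For the two identities in (\ref{15}) I would simply read off the defining relation (\ref{12a}) and its derivative (\ref{13}) in the special case $\mu=1$. Indeed $c_1$ was introduced in (\ref{12a}) as $c_1(s)=\int_s^\infty(1-Q_1(u))\,du$, and (\ref{13}) with $\mu=1$ (equivalently, differentiating this integral in $s$ via the fundamental theorem of calculus, using continuity of $Q_1$) yields $c_1'(s)=-(1-Q_1(s))$. Well-definedness of $c_1(s)$ for all $s\geq 0$ — in particular $c_1(0)=\int_0^\infty x\,q_1(x)\,dx=1$, and finiteness generally — is guaranteed by the first- and second-moment conditions in \emph{Assumption A}, so no integrability issue arises.

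Accordingly there is essentially no substantive obstacle: Lemma \ref{L1} is a one-line corollary of Lemma \ref{L0}. The only point that warrants a word of care is bookkeeping — making sure the rescaling $\alpha=1/\mu$ is applied consistently to \emph{both} arguments of $c_{\mu'}(s')$ (the strike $s$ and the scale $\mu$), so that no stray factor of $\mu$ is dropped. Once (\ref{14}) is in hand, it supplies the scale-free profile $c_1(\cdot)$ that feeds the subsequent Theorem \ref{T1}.
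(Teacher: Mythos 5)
Your argument is exactly the paper's: Lemma \ref{L1} is obtained there by applying Lemma \ref{L0} with $\alpha=1/\mu$, and (\ref{15}) is just (\ref{12a}) and (\ref{13}) specialized to $\mu=1$. The proposal is correct and matches the paper's route, including the attention to applying the rescaling to both arguments.
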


It should  be clear from the above results that this function,  $c_\mu(s)$,  can be re-scaled or "standardized" so that $c_\mu(s)/\mu$ is independent of $\mu$. In particular, if $s=b\, \mu$ for some $b>0$, then again by (\ref{14}), $c_\mu(b\, \mu)= \mu\, c_1(b)$.
\medskip

Next,  as in (\ref{9}), we define the {\it 'Delta'}-function corresponding to the function $c_\mu(s)$ in (\ref{11}) or (\ref{12a}), as $\Delta_\mu(s):= {{\partial}c_\mu(s)/ {\partial \mu}}$. In the next theorem we show that under {\it Assumption A}, $\Delta_\mu(\cdot )$ may be expressed in terms of the truncated mean of $X$ and the consequential representation of $c_\mu(\cdot)$.

\begin{theorem}\label{T1}  Suppose that  $Q_\mu\in {\cal Q}$ and thus it satisfies the conditions of  {\it Assumption A},  then for each   $s>0$, 
\be\label{16}
\Delta_\mu(s)=\frac{1}{\mu} \int_{s}^\infty xq_\mu(x)dx.
\ee
Further, $\Delta_\mu(s)\equiv \Delta_1(s/\mu)$, where  $\Delta_1(s):=\int_s^\infty uq_1(u)du\leq 1$ for any $s>0$. Hence,  $c_\mu(s)$ in (\ref{11}) may be written as as
\be\label{17}
c_\mu(s)=\,  \mu  \Delta_\mu(s)- s (1-Q_\mu(s))
\ee
\end{theorem}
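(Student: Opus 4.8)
The plan is to prove the three assertions of Theorem \ref{T1} in order, using Lemma \ref{L1} and a direct differentiation of the identity in (\ref{11}).

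First I would establish (\ref{16}). Starting from the representation $c_\mu(s)=\int_s^\infty x q_\mu(x)\,dx - s(1-Q_\mu(s))$ in (\ref{11}), I would substitute the scale-family form $q_\mu(x)=\frac{1}{\mu}q_1(x/\mu)$ and $Q_\mu(x)=Q_1(x/\mu)$ from \emph{Assumption A}, change variables $u=x/\mu$ in the integral, and obtain $c_\mu(s)=\mu\int_{s/\mu}^\infty u q_1(u)\,du - s\bigl(1-Q_1(s/\mu)\bigr)$. Differentiating this with respect to $\mu$ (with $s$ held fixed) is then a routine application of the product rule and Leibniz's rule: the derivative of the first term produces $\int_{s/\mu}^\infty u q_1(u)\,du$ plus a boundary term $-\frac{s}{\mu}\cdot\frac{s}{\mu}q_1(s/\mu)$, while the derivative of the second term produces $-s\cdot q_1(s/\mu)\cdot\bigl(-\frac{s}{\mu^2}\bigr) = \frac{s^2}{\mu^2}q_1(s/\mu)$; these boundary contributions cancel, leaving $\Delta_\mu(s)=\int_{s/\mu}^\infty u q_1(u)\,du$. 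Undoing the substitution (i.e. setting $x=\mu u$) gives exactly $\Delta_\mu(s)=\frac{1}{\mu}\int_s^\infty x q_\mu(x)\,dx$, which is (\ref{16}). Alternatively — and perhaps more cleanly — I could differentiate $c_\mu(s)=\mu\,c_1(s/\mu)$ from (\ref{14}) directly: $\frac{\partial}{\partial\mu}\bigl[\mu\,c_1(s/\mu)\bigr] = c_1(s/\mu) + \mu\,c_1'(s/\mu)\cdot\bigl(-\frac{s}{\mu^2}\bigr) = c_1(s/\mu) - \frac{s}{\mu}c_1'(s/\mu)$, then use $c_1'(\cdot)=-(1-Q_1(\cdot))$ from (\ref{15}) together with $c_1(s/\mu)=\int_{s/\mu}^\infty(1-Q_1(u))\,du = \int_{s/\mu}^\infty u q_1(u)\,du - \frac{s}{\mu}\bigl(1-Q_1(s/\mu)\bigr)$ (integration by parts), so that the $\frac{s}{\mu}$ terms cancel and again $\Delta_\mu(s)=\int_{s/\mu}^\infty u q_1(u)\,du$.

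Next, the identity $\Delta_\mu(s)\equiv\Delta_1(s/\mu)$ is then immediate: the formula just derived, $\Delta_\mu(s)=\int_{s/\mu}^\infty u q_1(u)\,du$, is by definition $\Delta_1(s/\mu)$. The bound $\Delta_1(s)\leq 1$ follows since $\Delta_1(s)=\int_s^\infty u q_1(u)\,du \leq \int_0^\infty u q_1(u)\,du = 1$ by the normalization $\int_0^\infty x q_1(x)\,dx=1$ in \emph{Assumption A}, the integrand being nonnegative.

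Finally, (\ref{17}) is obtained by combining (\ref{16}) with (\ref{11}): from (\ref{16}), $\int_s^\infty x q_\mu(x)\,dx = \mu\,\Delta_\mu(s)$, and substituting this into the right-hand side of (\ref{11}), $c_\mu(s)=\int_s^\infty x q_\mu(x)\,dx - s(1-Q_\mu(s))$, yields $c_\mu(s)=\mu\,\Delta_\mu(s)-s(1-Q_\mu(s))$. I do not anticipate a genuine obstacle here; the only point requiring a little care is the justification of differentiation under the integral sign and the cancellation of boundary terms when computing $\partial c_\mu/\partial\mu$ — this is where the second moment condition $\int_0^\infty x^2 q_1(x)\,dx<\infty$ in \emph{Assumption A} does its work, guaranteeing the relevant integrals and their $\mu$-derivatives are finite and the interchange is legitimate. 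Routing the argument through $c_\mu(s)=\mu\,c_1(s/\mu)$ (Lemma \ref{L1}) rather than differentiating the raw integral makes even this point essentially transparent.
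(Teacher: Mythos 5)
Your proposal is correct and, in its ``alternative'' form, essentially identical to the paper's own proof: the paper likewise differentiates $c_\mu(s)=\mu\,c_1(s/\mu)$ from Lemma \ref{L1}, uses $c_1^\prime=-(1-Q_1)$ from (\ref{15}) together with (\ref{11}) to identify $c_1(s/\mu)-\frac{s}{\mu}c_1^\prime(s/\mu)$ with $\int_{s/\mu}^\infty u q_1(u)\,du$, and then deduces the second and third assertions exactly as you do. Your primary route (Leibniz differentiation of the raw integral after the change of variables) is only a more explicit version of the same computation; the one blemish is that the two boundary terms you display each carry the wrong sign --- the correct contributions are $+\frac{s^2}{\mu^2}q_1(s/\mu)$ from the integral term and $-\frac{s^2}{\mu^2}q_1(s/\mu)$ from the $-s(1-Q_1(s/\mu))$ term --- but since they cancel either way, the conclusion is unaffected.
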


\begin{proof} To prove (\ref{16}),  note that by Lemma \ref{L1}, (\ref{15}) and (\ref{11}), 
\be\label{18}
\begin{aligned}
\Delta_\mu(s)=  & \frac{\partial}{\partial \mu}[\mu c_1(s/\mu)]=   c_1(s/\mu) -\frac{s}{\mu} c_1^\prime(s/\mu)= \\
  = &  \,  \int_{s/\mu}^\infty uq_1(u)du \equiv   \frac{1}{\mu} \int_{s}^\infty xq_\mu(x)dx.\\
\end{aligned}
\ee
The second part  follows immediately from the first part and {\it Assumption A} and noting that $\Delta_1(s)\leq \int_0^\infty uq_1(u)du= 1$. Finally,  since by (\ref{16}), $\int_s^\infty xq_\mu(x)\, d\,x= \mu \cdot \Delta_\mu(s)$, the main result in  (\ref{17}) follows directly from (\ref{11}). 
\hfill\end{proof}

\medskip
An  immediate conclusion of Theorem \ref{T1} is that if $Q_\mu$ is a member of the scale-family of distribution ${\cal Q}$ (by {\it Assumption A}) then the functions $c_\mu(s)$ can easily be evaluated by calculating first the values of  $c_1(\cdot)$ for the ratio $b=s/\mu$.  Specifically, 
\be\label{19}
c_\mu(s)=\mu \,  c_1(s/\mu)\equiv \mu \,  \Delta_1(s/\mu)-s\,  (1-Q_1(s/\mu)
\ee

 The results of the Theorem, either as given in  (\ref{17}) or in (\ref{19}), can be used directly for the risk-neutral  valuation of European call option with a strike $K$ and a current spot price $S_\tau\equiv S$, providing the expression for $C_S(K)$ as is given in (\ref{8}). That is, if $Q\in {\cal Q}$, then with $\mu=S\, e^{rt}$ applied to (\ref{19}), we have 
\be\label{20}
\begin{aligned}
C_S(K):= & e^{-rt}\E((X-K)^+|\, S)=e^{-rt}c_\mu(K) \\
 =\,   &S\,  \Delta_1(K/\mu)-{K \,  e^{-rt}}\,  (1-Q_1(K/\mu)).\\ 
\end{aligned}
\ee
We summarize these findings in the following Corollary.  
\begin{corollary}\label{C1} For any risk-neutral distribution $Q_\mu$ that satisfies, in addition to (\ref{6})-(\ref{7}), also the conditions of {\it Assumption A}, with $\mu=S\, e^{rt}$, so that $Q_\mu\in {\cal Q} $ (for some $Q_1(\cdot)$), we have as in (\ref{8}) that 
$$
C_S(K)=  S\,  \Delta_\mu(K)-{K \,  e^{-rt}}\, (1-Q_\mu(K)), 
$$
where $ \Delta_\mu(K):=\Delta_1(K/\mu)$ and $Q_\mu(K):= Q_1(K/\mu)$. Hence $Q_\mu$ also satisfies Heston's (1993) option pricing model (and solution) as given in (\ref{5}).  
\end{corollary}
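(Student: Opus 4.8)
The plan is to derive the statement of Corollary \ref{C1} as an essentially immediate consequence of Theorem \ref{T1} (in the form \eqref{19}–\eqref{20}) together with the risk-neutral constraints \eqref{6}–\eqref{7} and the equivalence \eqref{9a} established as Claim 7 in the Appendix. First I would observe that by hypothesis $Q_\mu \in {\cal Q}$, so {\it Assumption A} applies, and that the risk-neutral mean condition \eqref{6} forces the scale parameter to be $\mu = E_\mu(X) = S\,e^{rt}$; this is the choice that makes the undiscounted payoff function $c_\mu(\cdot)$ of Section 2 coincide with $e^{rt}C_S(\cdot)$, i.e. $c_\mu(K) = e^{rt}C_S(K)$.

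Next I would invoke Theorem \ref{T1}, specifically representation \eqref{17}, which gives $c_\mu(K) = \mu\,\Delta_\mu(K) - K(1-Q_\mu(K))$ with $\Delta_\mu(K) = \frac1\mu\int_K^\infty x q_\mu(x)\,dx = \Delta_1(K/\mu)$ and $Q_\mu(K) = Q_1(K/\mu)$. Multiplying through by $e^{-rt}$ and substituting $\mu = S\,e^{rt}$ so that $e^{-rt}\mu = S$, I obtain
\be
C_S(K) = e^{-rt}c_\mu(K) = S\,\Delta_\mu(K) - K\,e^{-rt}\,(1-Q_\mu(K)),
\ee
which is exactly the claimed identity and matches the form \eqref{8}. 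This is the computational heart of the corollary, and it is genuinely routine given Theorem \ref{T1}.

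Finally, to conclude that $Q_\mu$ satisfies Heston's (1993) pricing model \eqref{5}, I would compare the displayed expression for $C_S(K)$ term-by-term with \eqref{5}, namely $C_S(K) = S\,P_1 - K\,e^{-rt}\,P_2$. Matching coefficients identifies $P_2 = 1 - Q_\mu(K) = \Q(X > K)$, the risk-neutral probability of finishing in the money, and $P_1 = \Delta_\mu(K)$. Here I would appeal to \eqref{9a} (Claim 7 in the Appendix), which asserts $P_1 \equiv \Delta(K) = \partial C_S(K)/\partial S$; since differentiating our expression for $C_S(K)$ in $S$ — noting that $\mu = S\,e^{rt}$ depends on $S$ — reproduces $\Delta_\mu(K)$ by precisely the same chain-rule computation \eqref{18} that defined $\Delta_\mu$, the two notions of ``delta'' coincide, so the matching is consistent and \eqref{5} holds.

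I do not anticipate a serious obstacle: the only point requiring mild care is bookkeeping the dependence of $\mu = S\,e^{rt}$ on $S$ when checking that $\partial C_S(K)/\partial S$ really equals $\Delta_\mu(K)$, so that the identification $P_1 = \Delta_\mu(K)$ is legitimate via \eqref{9a} rather than merely a formal relabeling. Everything else is substitution into identities already proved. If one wanted to be fully self-contained one could also note in passing that \eqref{6} is automatically compatible with {\it Assumption A}, since $E_\mu(X) = \mu\int_0^\infty u q_1(u)\,du = \mu$ by the normalization $\int_0^\infty u q_1(u)\,du = 1$, so the scale family ${\cal Q}$ is exactly the family for which the mean equals the scale parameter.
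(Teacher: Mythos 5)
Your proposal is correct and follows essentially the same route as the paper: the identity $C_S(K)=e^{-rt}c_\mu(K)$ combined with Theorem \ref{T1} (equivalently \eqref{19}--\eqref{20}) gives the displayed formula, and the identification with Heston's form \eqref{5} rests on \eqref{9a} exactly as the paper intends. Your explicit chain-rule check that $\partial C_S(K)/\partial S=e^{-rt}\,\Delta_\mu(K)\,\partial\mu/\partial S=\Delta_\mu(K)$, so that the identification $P_1=\Delta_\mu(K)$ is substantive rather than a relabeling, is a point the paper leaves implicit and is worth keeping.
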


\begin{remark} In the case in which the  risk-neutral evaluation of the option includes a dividend with a rate $q$, then $\E(S_T|\, S)=S\, e^{(r-q)t}$ in (\ref{6}), in which case, by applying $\mu=S \, e^{(r-q)t}$ to (\ref{19}) we obtain
$$
C_S(K)=e^{-rt}\, c_\mu(K)= {S \, e^{-qt}} \,  \Delta_\mu(K)-{K \, e^{-rt}}\,  (1-Q_\mu(K)).
$$
\end{remark}
It should be clear from  (\ref{20}) that since the probability distribution $Q_\mu$ is assumed here to be a member of a scale family ${\cal Q}$, its values depend on $K$ and $S$ only through the ratio $K/S$. 
In the Appendix, we assert in Proposition \ref{P1}  that any risk  neutral probability distribution $Q_\mu$ that satisfies the solution (\ref{5}) for Heston's option pricing model, must also be a member of a scale family of distributions, with a scale parameter  $\mu=S \, e^{rt}$ (or $\mu=S \, e^{(r-q)t}$, in the case of a dividend yielding spot).  This assertion follows directly from the specific form of Heston's RND established in the Appendix which is given in terms of characteristic function corresponding to the term $P_2$ (see  (\ref{30}) and the subsequent comments there). Hence combined, the statements of Corollary {\ref{C1} and Proposition \ref{P1}, can be summarized in the following theorem.
\begin{theorem}\label{T2}
Let  $Q_\mu(\cdot)$ be any risk-neutral distribution  that satisfies (\ref{6})-(\ref{7}) with a corresponding RND  $q_\mu(\cdot)$ and with $\mu=S \, e^{rt}$.   Then $Q_\mu(\cdot)$  satisfies Heston's option pricing solution in (\ref{5}) (and equivalently in (\ref{8})) if and only if  $Q_\mu(\cdot)$ is member of a scale-family of distributions with a scale parameter $\mu$.  
\end{theorem}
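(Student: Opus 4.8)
The plan is to prove Theorem \ref{T2} by splitting it into its two implications, exactly as the excerpt advertises: the ``if'' direction is already packaged in Corollary \ref{C1}, so the real work is the ``only if'' direction, which the excerpt defers to Proposition \ref{P1} in the Appendix. I would therefore organize the proof as a short assembly argument: first cite Corollary \ref{C1} to dispose of the sufficiency claim, then give the converse.

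\medskip

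\noindent\textbf{Sufficiency.} Suppose $Q_\mu$ satisfies (\ref{6})--(\ref{7}) with $\mu = S\,e^{rt}$ and is a member of a scale-family ${\cal Q}$, i.e.\ it satisfies {\it Assumption A}. Then Theorem \ref{T1} gives the representation $c_\mu(s) = \mu\,\Delta_\mu(s) - s\,(1-Q_\mu(s))$ with $\Delta_\mu(s) = \mu^{-1}\int_s^\infty x q_\mu(x)\,dx$. Applying this at $s = K$ and dividing by $e^{rt}$ yields, via (\ref{20}) and Corollary \ref{C1},
\be
C_S(K) = S\,\Delta_\mu(K) - K\,e^{-rt}\,(1-Q_\mu(K)),
\ee
which is precisely the form (\ref{8}). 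Since (\ref{8}) is equivalent to Heston's solution (\ref{5}) — the identifications $P_2 = 1 - Q_\mu(K)$ from comparing (\ref{5}) with (\ref{7a}), and $P_1 = \Delta(K)$ from Claim 7 / (\ref{9a}) in the Appendix — the distribution $Q_\mu$ satisfies Heston's pricing model. This closes one direction.

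\medskip

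\noindent\textbf{Necessity.} Conversely, suppose $Q_\mu$ satisfies (\ref{6})--(\ref{7}) with $\mu = S\,e^{rt}$ and also reproduces Heston's price (\ref{5}) for the given calibrated parameters. The idea is to identify $Q_\mu$ with Heston's own implied in-the-money probability. Matching (\ref{5}) against the undiscounted expansion (\ref{7a}) forces $1 - Q_\mu(K) = P_2$ for every strike $K > 0$; but $P_2$, as recalled in the Appendix, is given explicitly through Heston's characteristic function (the term denoted around (\ref{30})), and that characteristic-function representation depends on $S$, $t$, $r$ only through the combination that makes $S_T/(S e^{rt})$ — equivalently $S_T/\mu$ — the relevant standardized variable. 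Concretely, one reads off from Heston's formula that $P_2 = 1 - Q_1(K/\mu)$ for a fixed cdf $Q_1$ not depending on $S$ (this is the content of Proposition \ref{P1}). Hence $Q_\mu(K) = Q_1(K/\mu)$ for all $K>0$, which is exactly {\it Assumption A}: $Q_\mu$ is a member of the scale-family generated by $Q_1$ with scale parameter $\mu$. Combining the two directions gives the stated equivalence.

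\medskip

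\noindent\textbf{Main obstacle.} The delicate point is the necessity direction, and specifically the claim that Heston's $P_2$ has the scale-form $1 - Q_1(\,\cdot/\mu\,)$. This requires genuinely inspecting the structure of Heston's characteristic function — verifying that, once the calibrated parameters $\vartheta$ and $V_0$ are fixed, the only way $S$ and $t$ enter the integrand for $P_2$ is through $\log(K/S) - rt = \log\!\big(K/\mu\big)$, so that the resulting probability is a function of $K/\mu$ alone. That computation is what Proposition \ref{P1} in the Appendix is there to supply; in the body of the proof I would simply invoke it. A secondary technical point is ensuring the moment conditions in {\it Assumption A} ($\int x q_1 = 1$, $\int x^2 q_1 < \infty$) are inherited, but the first is just the martingale/forward-price normalization (\ref{6}) rewritten in standardized variables, and the second is a mild regularity hypothesis one carries along rather than proves.
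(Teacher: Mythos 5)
Your proposal is correct and takes essentially the same route as the paper: the paper presents Theorem \ref{T2} precisely as the combination of Corollary \ref{C1} (sufficiency, via the representation in Theorem \ref{T1} and the identifications $P_2=1-Q_\mu(K)$ and $P_1=\Delta(K)$ from (\ref{9a})) with Proposition \ref{P1} (necessity, read off from the fact that Heston's characteristic function is affine in $\log(\mu)=\log(S)+rt$, so the implied density (\ref{30}) is of scale form in $\mu$). Your ``main obstacle'' paragraph correctly locates where the real work lies, and it is exactly the computation the paper relegates to the Appendix discussion preceding Proposition \ref{P1}.
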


\section{Examples of explicit RNDs for the Heston Model}

In view  of Theorem \ref{T2}, the quest for finding appropriate RND for Heston' SV model for a particular parametrization of $\vartheta=( \kappa, \theta, \eta, \rho)$ must be focused only on those members of a scale-family of distributions with a scale parameter $\mu=S \, e^{rt}$.  Accordingly, we provide in this section, five particular examples of well-known distributions, that satisfy the conditions of {\it Assumption A} and hence admit, per Corollary \ref{C1},  the presentation   (\ref{8}) for the Heston's option pricing model in (\ref{5}). These well-known distributions, namely, the {\it Log-Normal}, the {\it Gamma}, the {\it Inverse Gaussian},  the {\it Weibull} and the {\it Inverse Weibull} distributions, are re-parametrized under {\it Assumption A} to a standardized, one-parameter version having mean $1$ and a second moment that depends on a single  parameter $\nu>0$ (in fact, we take $\nu\equiv \sigma\sqrt{t}$, for some $\sigma>0$).  Due to their relative simplicity (involving only one parameter), we view these distributions as {\it inexpensive} RNDs, easy to obtain, to  calculate and calibrate as compared to the alternatives approaches available in the literature.  We note that while the gamma and the Weibull distribution were considered by Savickas (2002, 2005) for deriving `alternative' option pricing formulas, the motivation for the parametrization there was  made without regard to the spot price dynamics (but rather for fitting kurtosis and skewness) and therefore are different. 

With these standardized distributions in hand and the corresponding explicit expressions for $Q_1(\cdot)$ as obtained under {\it Assumption A}, we utilize (\ref{19}) (with $\mu\equiv 1$) to first calculate  in each case  the expression 
for 
$$
c_1(s)= \Delta_1(s)-s(1-Q_1(s)),
$$
which is then used to obtain, with  $\mu>0$, the expression for the {\it undiscounted} option price as,
$$
c_\mu(s)=  \mu\times \left[\Delta_1({{s/\mu}})-\frac{s}{\mu}\times(1-Q_1(s/\mu))\right].
$$
Finally, as in (\ref{20}), the corresponding expression for the call option price is obtained as $C_S(K)= e^{-rt}c_\mu(K)$  (with $ \mu=Se^{rt}, \ s=K$ and $\nu=\sigma\sqrt{t}$).  We point out again, that each of these five distributions  would satisfy as RND,  Heston's (1993) general solution for the valuation of a European call option as is given in (\ref{5}).   We begin with the log-normal distribution which results with the classical Back-Scholes option pricing model (as given in (\ref{3})-(\ref{4})) .

\subsection{{The Log-Normal RND} } Suppose that the random variable $U$ has the 'standard'  (one-parameter) log-normal  distribution having mean $E(U)=1$ and variance $Var(U)=e^{\nu^2}-1$, for some $\nu>0$, so that $W=\log(U)\sim {\cal N}( -\nu^2/2,\  \nu^2)$. Accordingly, the $pdf$ of $U$ is given by;
$$
q_1(u)=\frac{1}{u\nu} \times \phi\left(\frac{\log(u)+\nu^2/2}{\nu}\right), \ \ \ \qquad u>0,  
$$
and its $cdf$ 
$$
Q_1(u)=Pr(U\leq u) =\int_0^u q_1(s)ds= \Phi\left(\frac{\log(u)+\nu^2/2}{\nu}\right), \ \ \forall u>0. 
$$
It is straightforward to verify that if $X\equiv \mu U$ for some $\mu>0$, then the $pdf$ of $X$ is the 'scaled' version of $q_1$, namely, $q_\mu(x)= \frac{1}{\mu}\, q_1(x/\mu)$, so this distribution satisfies {\it Assumption A}.

Next, we calculate the expression of $\Delta_1(s)$ which upon using the relation $U\equiv e^W$,  becomes
$$
\begin{aligned}
\Delta_1(s):=  & \int_{s}^\infty uq_1(u)dx=  \int_{\log(s)}^\infty e^w\,  \phi\left(\frac{w+\nu^2/2}{\nu}\right)\frac{dw}{\nu}\\
= & \int_{\log(s)}^\infty  \phi\left(\frac{w-\nu^2/2}{\nu}\right)\frac{dw}{\nu}= 1- \Phi\left(\frac{\log(s)-\nu^2/2}{\nu}\right). 
\end{aligned}
$$
Hence, for the 'standardized' model we have that 
$$
\begin{aligned}
c_1(s)=  & \Delta_1(s)-s(1-Q_1(s)) \equiv \\
 &  \left[1-\Phi\left(\frac{\log(s)-\nu^2/2}{\nu}\right)\right]- s \left[1-\Phi\left(\frac{\log(s)+\nu^2/2}{\nu}\right)\right]\\
\end{aligned}
$$
Accordingly, by Lemma \ref{L1}  and (\ref{19}), $c_\mu(s)\equiv \mu\times c_1(s/\mu)$ and we therefore immediately obtain the following expression for $c_\mu(s)$ as, 
$$
\begin{aligned}
c_\mu(s)= & \mu\times \left[\Delta_1(s/\mu)-\frac{s}{\mu}\times(1-Q_1(s/\mu))\right]\\
= & \mu \times \left[1-\Phi\left(\frac{\log(s/\mu)-\nu^2/2}{\nu}\right)\right]- s \times \left[1-\Phi\left(\frac{\log(s/\mu)+\nu^2/2}{\nu}\right)\right]\\
\equiv & \mu \times \Phi\left(\frac{\log(\mu/s)+\nu^2/2}{\nu}\right)- s \times \Phi\left(\frac{\log(\mu/s)-\nu^2/2}{\nu}\right),\\
\end{aligned}
$$
where the last equality utilized the symmetry of the normal distribution.  Finally, to calculate under the log-normal RND the price of a call option  at a strike $K$ when the current price of the spot is $S$, we utilize the above expression, $c_\mu(s)$,  with $\mu\equiv S\, e^{rt}$, $s\equiv K$ and $\nu\equiv  \sigma\sqrt{t}$ to obtain, $C_S(K)=e^{-rt}c_\mu(K)$, which matches exactly the Black-Scholes call option price as is given in (\ref{3})-(\ref{4}).

\subsection{The Gamma RND} We begin with some standard notations.  We write $W\sim {\cal G}(\alpha, \lambda)$ to indicate that the random variable $W$ has the gamma distribution with a scale parameter $\lambda>0$ and a shape parameter $\alpha>0$, in which case we write $g(\cdot; \alpha, \lambda)$ and $G(\cdot; \alpha, \lambda)$ for the corresponding $pdf$ and $cdf$ of $W$, respectively. Recall that $E(W)=\alpha/\lambda$ and $Var(W)=\alpha/\lambda^2$.  Additionally, we denote by $\Gamma(\alpha):=\int_{0}^\infty y^{\alpha-1} e^{-y}dy$ the gamma function whose incomplete version is $\Gamma(\xi, \ \alpha):=\int_{0}^\xi y^{\alpha-1} e^{-y}dy$, is defined for any $\xi>0$. 

Now suppose that a random variable $U$ has the 'standard' (one-parameter) Gamma  distribution having mean $E(U)=1$ and variance $Var(U)=\nu^2$, for some $\nu>0$,  so that $U\sim {\cal G}( a,\  a)$ where we substituted $a\equiv 1/\nu^2$.  
Accordingly,  the $pdf$ of $U$ is given by
$$
q_1(u):=  g(u; a, a)=\frac{{a} (au)^{a-1}e^{-au}} {\Gamma(a)},\ \ \ \qquad u>0.
$$
and its $cdf$,  by
$$
Q_1(u)= Pr(U\leq u):= G(u; a, a)= \frac{\Gamma(au, a)}{\Gamma(a)},  
$$
for any $u>0$. It is straightforward to verify that if $X\equiv \mu U$ for some $\mu>0$, then the $pdf$, $q_\mu(\cdot)$ of $X$ is the 'scaled' version of $q_1(\cdot)$, and that {\it Assumption A} holds in this case too.  Next, we calculate the expression for $\Delta_1(s)$,
$$
\begin{aligned}
\Delta_1(s)=  & \int_{s}^\infty uq_1(u)du=  \int_{s}^\infty u\,g(u; a, a)du \\
= & \int_{s}^\infty  \frac{a(au)^{a}e^{-au}} {a\Gamma(a)}du= 1- \frac{\Gamma(as, a+1)}{\Gamma(a+1)}\equiv 1-G(s;  a+1, a). 
\end{aligned}
$$
Accordingly, we obtain for the 'standardized' Gamma model that 
$$
\begin{aligned}
c_1(s)=  & \Delta_1(s)-s(1-Q_1(s)) \equiv \\
= &  \left[1- \frac{\Gamma(as, a+1)}{\Gamma(a+1)}\right]- s \left[1-\frac{\Gamma(as, a)}{\Gamma(a)}\right]\\
= & \left[1- G(s; a+1, a)\right]-s \left[1-G(s; a, a)\right]\\
\end{aligned}
$$

Again,  by Lemma \ref{L1}  and (\ref{19}), $c_\mu(s)\equiv \mu\times c_1(s/\mu)$ and we therefore immediately obtain the following expression for $c_\mu(s)$ in this case of the Gamma model as, 
\be\label{22}
\begin{aligned}
c_\mu(s)= & \mu\times \left[\Delta_1({{s/\mu}})-\frac{s}{\mu}\times(1-Q_1(s/\mu))\right]\\
= & \mu \times \left[1- G({s}/{\mu}; a+1, a)\right]- s \times \left[1-G({s}/{\mu}; a, a)\right]\\
\end{aligned}
\ee
Finally, to calculate under this Gamma RND  the price of a call option  at a strike $K$ when the current price of the spot is $S$, we will utilize (\ref{22}) with $\mu\equiv S\, e^{rt}$, $s\equiv K$ and $\nu\equiv  \sigma\sqrt{t}$ (so that $a\equiv 1/\sigma^2 t$)  to obtain, $C_S(K)=e^{-rt}c_\mu(K)$. 

\subsection{The Inverse Gaussian RND}  Using standard notation we write $W\sim {\cal IN}(\alpha, \lambda)$ to indicate that the random variable $W$ has the Inverse Gaussian distribution with mean 
 $E(W)=\alpha$ and $Var(W)=\alpha^3/\lambda$.  Now suppose that a random variable $U$ has the 'standard' (one-parameter) Inverse Gaussian distribution having mean $E(U)=1$ and variance $Var(U)=\nu^2$, for some $\nu>0$,  so that $U\sim {\cal IN}( 1,\  1/\nu^2)$.   Accordingly, the $pdf$ and $cdf$ of $U$ are given by;
$$
q_1(u)={{1}\over {\nu u^{3/2}}}\times \phi\left({{u-1}\over {\nu \sqrt{u}}}\right), \qquad  u>0,
$$
and
$$
Q_1(u)= \Phi\left({{u-1}\over {\nu \sqrt{u}}}\right)+e^{{{2}/{\nu^2}}}\times  \Phi\left(-{{u+1}\over {\nu \sqrt{u}}}\right)\ \qquad \forall u>0.
$$
Again, one can verify that if $X\equiv \mu U$ for some $\mu>0$, then the $pdf$, $q_\mu(\cdot)$ of $X$ is the 'scaled' version of $q_1(\cdot)$ above, so that {\it Assumption A} holds in this case too. 

In the case of this distribution, the values of of $\Delta_1(s)= \int_{s}^\infty uq_1(u)du$ must be evaluated numerically which,  when combined with the expression of $Q_1(s)$ given above,  provide the values of 
$$
c_\mu(s)=  \mu\times \left[\Delta_1({{s/\mu}})-\frac{s}{\mu}\times(1-Q_1(s/\mu))\right],
$$
for any $\mu>0$. Here again, the corresponding values of the call option $C_S(K)$ may be obtained,exactly along the same lines as in the previous examples, with $\mu\equiv S\, e^{rt}$, $s\equiv K$ and $\nu=\sigma\sqrt{t}$. 

\subsection{The Weibull RND} Using standard notation we write $W\sim {\cal W}(\xi,  \lambda)$ to indicate that the random variable $W$ has the
Weibull distribution with  $cdf$ and $pdf$ which are of the form, 
$$
F_W(w)=1- e^{-(w/\lambda)^\xi}, \ \ \ \text{and} \ \ \ f_W(w)=\frac{\xi}{\lambda}(\frac{w}{\lambda})^\xi e^{-(w/\lambda)^\xi}, \ \ \ w>0,
$$
respectively, where $\lambda>0$ is the scale parameter and $\xi>0$ is the shape parameter.  The mean and variance of $W$ are given by, 
$$
E(W)={\lambda}h_1(\xi)\qquad \text{and} \qquad Var(W)={\lambda}^2(h_2(\xi)-h_1(\xi)^2),
$$
where, $h_j(\xi):=\Gamma(1+j/\xi), \ j=1, \dots, 4.$   Now suppose that a random variable $U$ has the 'standard' (one-parameter) Weibull distribution having mean $E(U)=1$ and variance $Var(U)=\nu^2$, for some $\nu>0$. That is, for a given $\nu>0$, we let  $\xi^*\equiv \xi(\nu)$ be the (unique) solution of the equation 
\be\label{23}
\frac{h_2(\xi)}{h_1^2(\xi)}=1+\nu^2,
\ee
in which case, $h_j^*\equiv h_j(\xi^*), \ j=1, 2$, $\lambda^*\equiv 1/h_1^*$ and $U\sim {\cal W}( \xi^*,\  \lambda^*)$.   Accordingly, the $pdf$ and $cdf$ of $U$ are given by,
$$
Q_1(u)=1- e^{-(u/\lambda^*)^{\xi^*}}, \ \ \ \text{and} \ \ \ q_1(u)=\frac{\xi^*}{\lambda^*}(\frac{u}{\lambda^*})^{\xi^*} e^{-(u/\lambda^*)^{\xi^*}}, \ \ \ u>0,
$$
Again, it can be easily verified  that if $X\equiv \mu U$ for some $\mu>0$, then the $pdf$, $q_\mu(\cdot)$ of $X$ is the 'scaled' version of $q_1(\cdot)$ above, so that {\it Assumption A} holds in this case too.  For this RND, the values of of $\Delta_1(s)$ can be obtained in a closed form as
$$
\Delta_1(s)= \int_{s}^\infty uq_1(u)du =1-\frac{\Gamma((s/\lambda^*)^{\xi^*}; 1+1/\xi^*)}{\Gamma(1+1/\xi^*)},
$$
which, together with  the expression of $Q_1(\cdot)$ given above,  provide the values of 
$$
c_\mu(s)=  \mu\times \left[\Delta_1({{s/\mu}})-\frac{s}{\mu}\times(1-Q_1(s/\mu))\right],
$$
for any $\mu>0$. Here again, the corresponding values of the call option $C_S(K)$ may be obtained,exactly along the same lines as in the previous examples, with $\mu\equiv S\, e^{rt}$, $s\equiv K$ and $\nu=\sigma\sqrt{t}$.

\subsection{The Inverse Weibull RND} In similarilty to the above example, we write $W\sim {\cal IW}(\xi,  \alpha)$ to indicate that the random variable $W$ has the Inverse 
Weibull distribution (see for example,  de Gusmão at. el. (2009) )  with $cdf$ and $pdf$ which are of the form, 
\be\label{25}
F_W(w)=e^{-(\alpha/w)^\xi}, \ \ \ \text{and} \ \ \ f_W(w)=\frac{\xi}{\alpha}(\frac{\alpha}{w})^{\xi+1} e^{-(\alpha/w)^\xi}, \ \ \ w>0,
\ee
respectively, where $\alpha>0$ is the scale parameter and $\xi>2$ is the shape parameter.  In this case, the  mean and variance of $W$ are given by, 
$$
E(W)={\alpha}\tilde h_1(\xi)\qquad \text{and} \qquad Var(W)={\alpha}^2(\tilde h_2(\xi)-\tilde h_1^2(\xi)),
$$
where, $\tilde h_j(\xi)\equiv h_j(-\xi)= \Gamma(1-j/\xi), \ j=1, \dots, 4. $ Here too, we let $U$ have the 'standard' (one-parameter) Inverse Weibull distribution with  mean $E(U)=1$ and variance $Var(U)=\nu^2$, for some $\nu>0$. That is, for a given $\nu>0$, we let  $\xi^*\equiv \xi(\nu)$ be the (unique) solution of the equation 
\be\label{26}
\frac{\tilde h_2(\xi)}{\tilde h_1^2(\xi)}=1+\nu^2,
\ee
in which case, $U\sim {\cal IW}( \xi^*,\  \alpha^*)$ with $\alpha^*=1/\tilde h_1(\xi^*)$.   Accordingly, the $pdf, \ q_1(u)$, and $cdf, \ Q_1(u)$, of $U$ are as given in  (\ref{25}), but with $\xi^*$ and $\alpha^*$.  Hence, we may proceed exactly along the lines of the previous example to calculate $c_1(s)$, and $c_\mu(s)$ and hence, $C_S(K)= e^{-rt}c_\mu(K)$  (with $ \mu=Se^{rt}, \ s=K$ and $\nu=\sigma\sqrt{t}$).

\subsection{On  Skewness and Kurtosis} As can be seen, the distribution in each of these five examples satisfies the conditions of {\it Assumption A} and hence by Corollary \ref{C1}  could potential serve  as RND for Heston's SV model (\ref{1}). These distributions are defined by a single parameter, namely $\nu\equiv\sigma\sqrt{t}$, that affects their features, such as {\it kurtosis} and {\it skewness}, and hence their suitability as RND for particular scenarios of the SV model (\ref{1})-- as are determined by the structural model parameter  $\vartheta=(\kappa, \theta, \eta, \rho)$ (more on this point in the next section). However, for sake of completeness and for future reference we provide in Table 1 the expression for the   {\it kurtosis} and {\it skewness} for these five distributions. 

It is interesting to note that, in the relevant parametric domain, all but the Weibull example, have positive {\it Skewness} measure.  It can be numerically verified that in the Weibull case, $\gamma_s(\xi(\nu))$ changes it's sign and is negative  once $\nu<0.3083511$ and that $\gamma_k(\xi(\nu))<3$ for $0.2007844< \nu < 0.4698801$ and is a largely leptokurtic distribution for $\nu>0.4698801$. Hence, the Weibull distribution would be particularly  useful when the implied RND is negatively skewed, such as in the cases when the spot is an Index. More on this point in the next section.  

\begin{table}[h]
\begin{center}
\caption{The {\it skewness} and excess {\it kurtosis} measures of the RND Examples 3.1-3.5 as functions of the single parameter $\nu\equiv \sigma \sqrt{t}$.}

\begin{tabular}{ccccccccccccc}
\hline
 Distribution  &$E(U)$ 	& $Var(U)$ &  $Skew$	& & $Exc. Kurtosis$ \  \\ \hline
 \\
 ${\cal G}( 1/\nu^2, \ 1/\nu^2)$	 &  1 	&  $\nu^2$ & $2\nu$	& &	$6\nu^2$	 \  \\ 
 \\
 ${\cal IN}( 1,\  1/\nu^2)$	 & 1	& $\nu^2$ & $3\nu$	& & $15\nu^2$	 \  \\ 
 \\
  ${\cal W}(\xi,\   1/g_1(\xi))^{*}$	 & 1	& $\nu^2$ & $\gamma_s(\xi)$ & &	 $\gamma_k(\xi)-3$ 	\  \\ 
  \\
 ${\cal IW}(\xi,\   1/h_1(\xi))^{**}$ & 1	& $\nu^2$ & $\gamma_s(-\xi)$ & &	 $\gamma_k(-\xi)-3$ \  \\ 
 \\
  ${\cal LN}( -\nu^2/2,\  \nu^2)$ 	 & 1 	& $e^{\nu^2}-1$ & $(e^{\nu^2}+2) \sqrt{e^{\nu^2}-1}$	& &	$ e^{4\nu^2}+2 e^{3\nu^2}+3e^{2\nu^2} -6$	  \  \\ 
 \\ \hline
\end{tabular}
\end{center}
\vskip -10pt
\small{$^*$ Here $\xi\equiv \xi(\nu)$ solves equation (\ref{23}); \\ $^{**}$ Here $\xi\equiv \xi(\nu)$ solves equation (\ref{26}) and it is assumed that $\nu$ is such that $\xi(\nu)>4$ for these expressions to be valid. In particular, with  $h_j(\xi)=\Gamma(1+j/\xi), \ j=1, \dots, 4$, 
$$
\gamma_s(\xi)=\frac{h_3(\xi)-3h_2(\xi)h_1(\xi)+2h_1^3(\xi)}{\left[h_2(\xi)-h_1^2(\xi)\right]^{3/2}}
$$
and
$$
\gamma_k(\xi)=\frac{h_4(\xi)- 4 h_3(\xi)h_1(\xi) +6 h_2(\xi) h_1^2(\xi)- 3 h_1^4(\xi)} {\left[h_2(\xi)-h_1^2(\xi)\right]^{2}}.
$$
} 
\end{table}

\section{Comparisons of the Heston's RNDs}
Having introduced in the previous section several examples of distributions that serve as possible RND for Heston's (1993) option valuation (\ref{5}) under the stochastic volatility model (\ref{1}), we dedicate this section to illustration of their applicability and their relative comparison. In the Appendix, we provide the closed-form expressions for Heston's $P_1$ and $P_2$ as are given in terms of their characteristic functions (see Heston (1993)). These terms enable us to compute, for given $S_\tau=S$, $V_\tau=V_0$ and $r$, and for each choice of  $\vartheta=(\kappa, \theta, \eta, \rho)$, Heston's call price $C_S(K)$ as in (\ref{5}) as well as Heston's RND as derived from the  characteristic functions of $P_2$ (see Appendix for details). Features of this distribution, such as {\it Kurtosis} and {\it Skewness} as are largely determined by $\eta$ and $\rho$, respectively (see Bakshi,  Cao and Chen (1997) ), would serve as guide for matching a particular proposed RND from among our five examples (see also Table 1). For instance, in cases which admit an RND with a distinct  negative skew, the {\it Weibull} distribution could be considered, whereas, in those cases with a distinct  positive  skew, the {\it Inverse Weibull} or the other distributions  discussed in Section 3 could be considered.

Additionally, we may simulate observations on $(S_T, V_T)$ from a discretized version of Heston's stochastic volatility process (\ref{1}) to obtain the simulated rendition of the marginal distribution of $S_T$. In light of the scaling property of the RND, we present, whenever convenient, the results in terms of the rescaled spot priced, $S^*=S_T/\mu$, where $\mu=Se^{rt}$ (see Corollary 4).  In the simulations, we employed either the (reflective version of) Milstein's  (1975) discretization scheme or Alfonsi's (2010)  implicit discretization scheme all depending on whether the so-called Feller condition, $\zeta:=\kappa \theta/\eta^2>1$, holds or not (see Gatheral (2006) for a discussion).  We note that $\zeta$ is intimately related to the conditional distribution of $V_T$ implied by the SV model (\ref{1}), (see Proposition 2 of Andersen (2008) for details).  In all cases we also included a comparison of the Monte-Carlo distribution of $S^*$ to the actual Heston's RND as was numerically  calculated  using (\ref{30}) under the `calibrated' values of $\vartheta=(\kappa, \theta, \eta, \rho)$. 

\bigskip

\begin{figure}[h] 
  \centering
  \includegraphics[width=7.0in,height=4.5in,keepaspectratio]{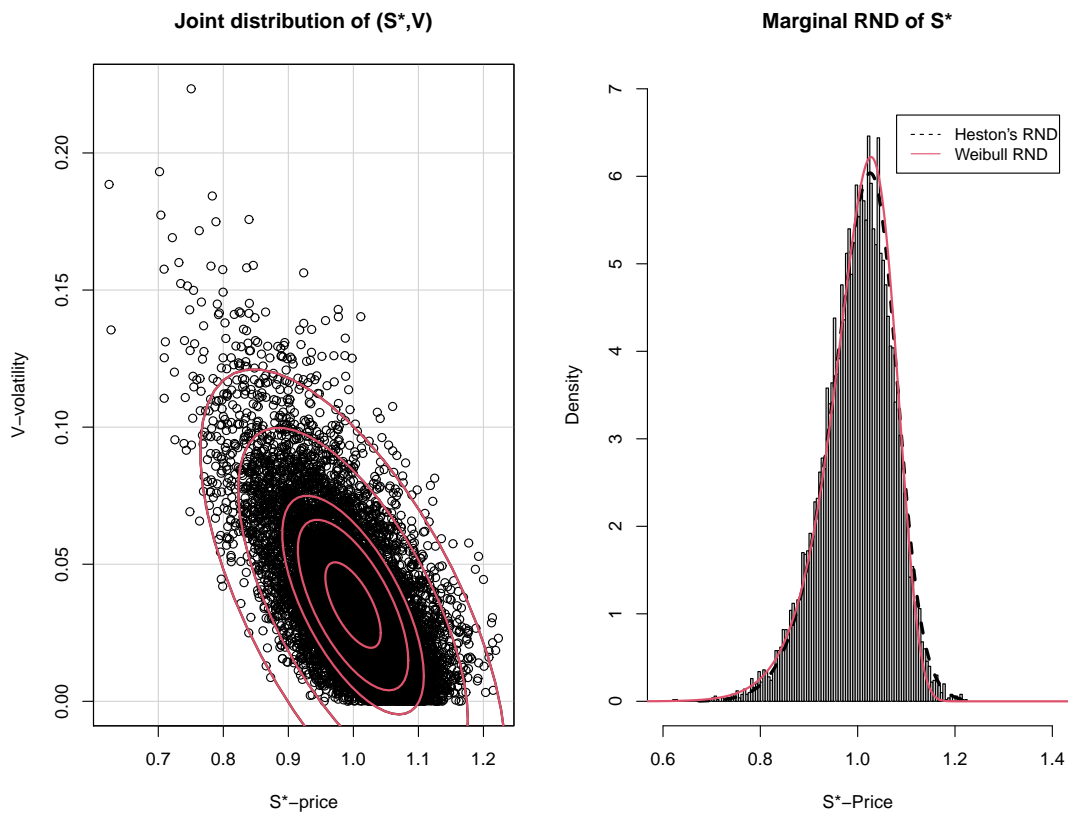}
  \caption[Figure 1]{\small{\it Simulated joint  $(S^*, V)$-distribution,  and the Heston's and Weibull RNDs for the {\bf S\&P 500} data based on the calibrated parameter  
  $\hat \vartheta =(1.15,\ 0.0347826,\  0.39,\  -0.64)$ as provided by  Bakshi,  Cao and Chen (1997). }}
  \label{fig:fig1}
\end{figure}

In the first two examples,  we use values of the structural parameters, $\vartheta=( \kappa, \theta, \eta, \rho)$, already calibrated to market data on  as can be found from Bakshi,  Cao and Chen (1997) (on the S\&P 500) and Mrázek and Pospíšil (2017) (on the ODAX).  
These two examples which involve market data on traded Indexes are  are used to illustrate the applicability of the Weibull distribution to situations in which the RND is negatively skewed and 
largely leptokurtic one.  Other similar illustrations using calibrated parameter values. such as from Lemaire ,  Montes and  Pagès (2020) ( on the E{\small URO} S{\small TOXX} 50), are also available but not presented here due to the limited space.   To allow for as realistic  as possible additional comparisons, the next example is based on current (as closing of December 31, 2020) market option data of  AMD. This  example serves to illustrate the applicability of the other RND candidates of Section 3, to situations exhibiting mild positive skewness. 

\bigskip
\noindent {\bf Example: S\&P 500:} Bakshi,  Cao and Chen (1997)   presented an extensive market data study for comparing several competing stochastic volatility models, including that of Heston's (1993). The data used covered options and spot prices for the S\&P 500 Index starting from June 1, 1988 through May 31, 1991.  From Table III there we find that  in addition to $r=0.02$, the `All Option'  estimated (or implied) structural parameter,  corresponding Heston's SV model, is
$$
\hat \vartheta =(1.15,\ (0.04/1.15),\  0.39,\  -0.64). 
$$
 In this case, $\hat \zeta= 0.526<1$, hence we used Milstein (reflective) scheme to obtain, for a short contract duration with $t=56/365=0.153$ year,  a Monte-Carlo sample of $M=10,000$ simulated pairs $(S^*,  V)$ with (standardized)  spot price and volatility, according to the SV model (\ref{1}).  Their joint distribution is presented in Figure 1a, where we have superimposed the matching 16\%, 50\%, 68\% ,  95\%  and 99.5\% contour lines. In Figure 1b we present the histogram of the simulated marginal distribution of the spot price $S^*$. The mean and standard deviation of these $M$ simulated spot price values are $\bar S^*=0.999462$ and $\hat \sigma\sqrt{t}=0.07213028$. We also included in the figure the curve of  the (implied, by $\hat \vartheta$)  Heston's RND as was computed directly by using (\ref{30}).   As is expected in the case of (risk-neutrality) modeling the spot prices of an Index,  the implied RND is {\bf negatively skewed} ($sk=-0.5018587$) which suggests a comparison against the Weibull distribution of Section 3.4. To that end, (and since we do not have the actual option data used by the authors) we simply matched $\nu$ to the `observed' value of $\hat \sigma\sqrt{t}$ and used it to obtain the numerical solution of equation (\ref{23}) as $\hat \xi= 17.40468$ at which point, $h_1(\hat \xi)=0.9699386$. For comparison, we also added to Figure 1b the plot of the ${\cal W}(\hat \xi, 1/h_1(\hat \xi))$ RND. As can be seen, the two RND  curves are almost indistinguishable. 
 
 \begin{figure}[h] 
  \centering
  \includegraphics[width=7.0in,height=4.5in,keepaspectratio]{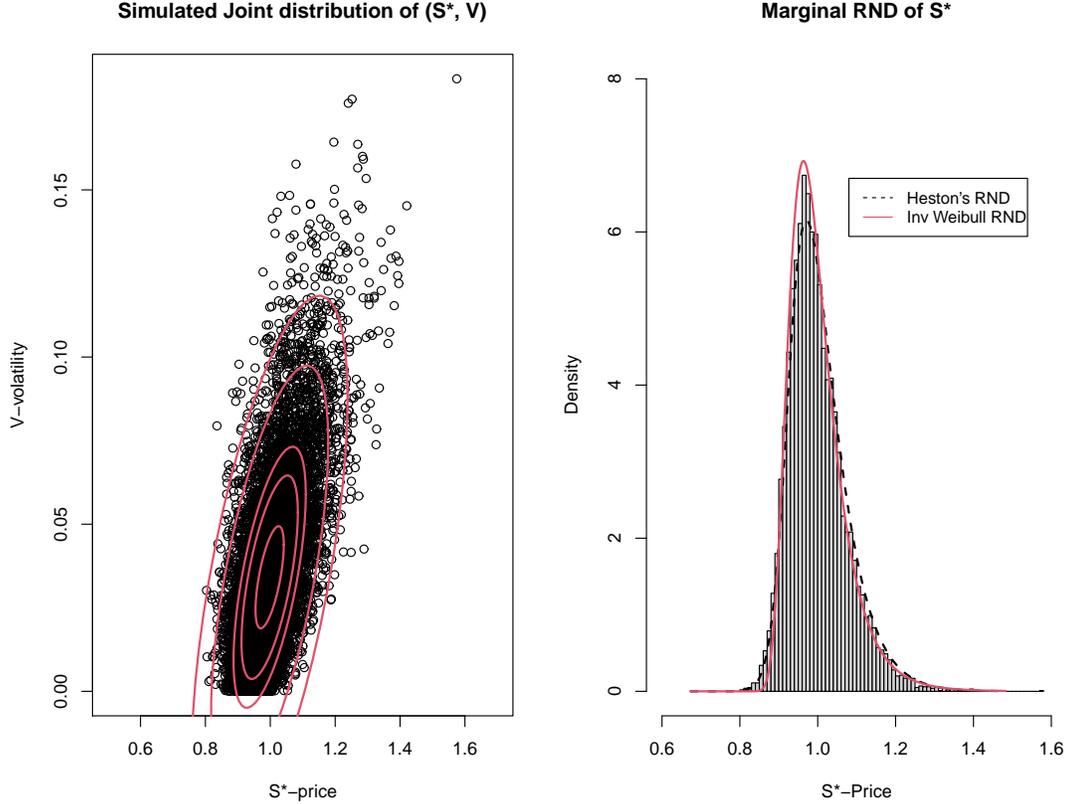}
  \caption[Figure 2]{\small{\it Illustrating the Inverse Weibull $pdf$  for the {\bf S\&P 500} data of Bakshi,  Cao and Chen (1997) calibrated parameter  {\bf but with hypothetically positive correlation} ($\rho=0.64$) resulting with a positively  skewed RND. }}
  \label{fig:fig1}
\end{figure}
 \bigskip
 To further illustrate the applicability of the proposed RND to situations with distinct {\bf positive}  {\it Skewness}, we considered again Bakshi,  Cao and Chen (1997)  calibrated parametrization used for Figure 1, but now with a ({\it hypothetically}) positive correlation, so that $\hat \vartheta =(1.15,\ (0.04/1.15),\  0.39,\  0.64).$  The simulated Monte-Carlo distributions (joint and marginal) are presented in Figure 2, exhibiting the distinct positive {\it Skewness} of  Heston's RND  (calculated from (\ref{30})).  This suggested a comparison to the Inverse Weibull distribution discussed in Section 3.5.   The mean and standard deviation of these $M$ simulated spot price values are $\bar S^*=0.9980681$ and $\hat \sigma\sqrt{t}=0.07379416$. Again, the value of $\nu$ was match to $\hat \sigma\sqrt{t}$ to obtain the solution of equation (\ref{26}) as  $\hat \xi= 18.16455$ at which point, $\tilde h_1(\hat \xi)=1.034936$. Accordingly, we added for comparison, the  plot of the ${\cal IW}(\hat \xi, 1/\tilde h_1(\hat \xi))$ RND to Figure 2b, illustrating the extent of the agreement between Heston's (implied) RND and the Inverse Weibull distribution in this case (with a distinct positives skew).

\begin{figure}[h] 
  \centering
  \includegraphics[width=7.0in,height=4.5in,keepaspectratio]{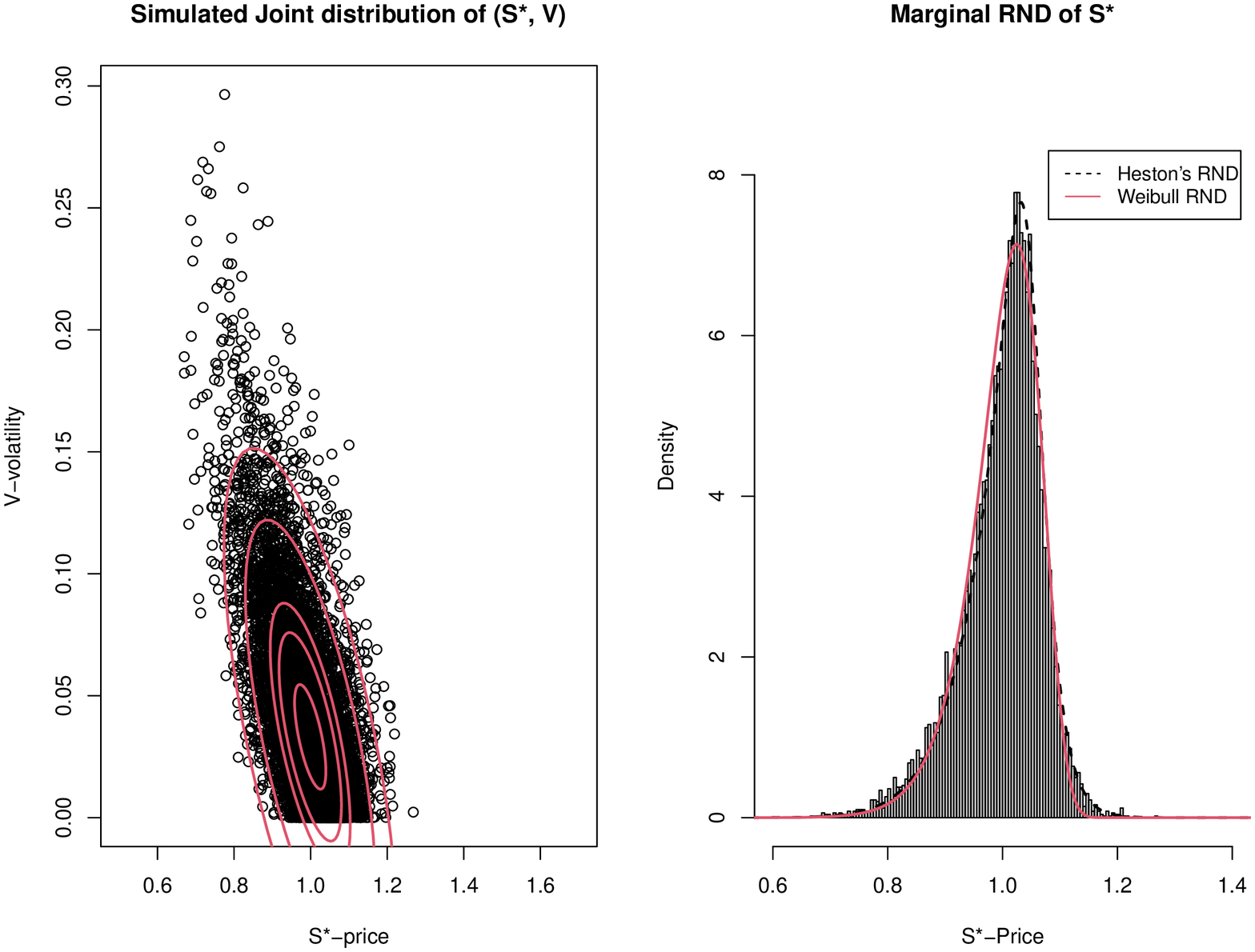}
  \caption[Figure 3]{\small{\it Simulated joint  $(S^*, V)$-distribution,  and the  (conditional) Heston's and Weibull RND for the {\bf ODAX} data based on the calibrated parameter  
  $\hat \vartheta =(1.22136,\ 0.06442,\  0.55993,\  -0.66255)$ as provided by Mrázek and Pospíšil (2017). }}
  \label{fig:fig1}
\end{figure}
 
 \bigskip
 
\noindent {\bf Example:  ODAX:} Mrázek and Pospíšil (2017)  studies various optimization techniques for calibrating and simulating the the Heston model. For demonstrate their results, they used the ODAX option Index with 5 blended maturities of three and six months and with 107 strikes,  as were recored on March 19, 2013.  The calibrated results of the structural parameter $\vartheta=(\kappa, \theta, \eta, \rho)$ are provided in  Table 4 there, 
 $$
\hat \vartheta =(1.22136, \ 0.06442, \  0.55993,\  -0.66255). 
$$
with $r=0.00207$ and with ``current'' $S=7962.31$ and with $V_0= 0.02497$.  Under this parametrization, we simulated with $t=64/365$, a total of $M=10,000$ pairs of $(S^*, \, V)$ to obtain from the discretized process, the renditions of their joint  distribution as well as the marginal distribution of $S^*$. These are presented in Figure 3. The mean and standard deviation of these $M$ simulated spot price values are $\bar S^*=0.9989159$ and $\hat \sigma\sqrt{t}=0.0692782$.  Superimposed, is the Heston's RND as was calculated, under this parametrization, from the characteristic function given the appendix. Again, as is expected for this index too,   the implied RND is {\bf negatively skewed} ($sk=-0.814462$) which suggests a comparison against the Weibull distribution. In this case too, the value of $\nu$ was match to $\hat \sigma\sqrt{t}$ to obtain the solution of equation (\ref{26}) as  $\hat \xi= 19.90341$ at which point, $\tilde h_1(\hat \xi)=0.9733867$. The graph of the  Weibull distribution, ${\cal W}(\hat \xi, 1/h_1(\hat \xi))$, is also displayed in the figure, indicating the excellent agreement, in this case two,  between the RNDs. 

\bigskip

\noindent {\bf Example:  AMD:} This example is based on real and current option data that we retrieved from {\it Yahoo Finance} as of the closing of trading on December 31, 2020.   The closing price of his stock, on that day, was $91.71$ and it pays no dividend, so that $q=0$ to add to the prevailing (risk-free) interest rate of $r=0.0016$. We chose this stock, AMD (Advanced Micro Devices Inc.) which is  a member of the technology sector, since it exhibits more directional risk to the upside, and hence with potentially, positively skewed RND.  

From the available option series, we selected the February 19, 2021 expiry, due to the relatively short contract with $t=47/365$ and some $N=39$ strikes, $K_1, \dots, K_{39}$  with corresponding call option (market) prices $C_1, \dots, C_{39}$ are available (we actually recorded the option prices as the average between the bid and ask). As standard measure of the {\it goodness-of-fit} between the model-calculated option price $C^{\tiny{Model}}(K_i)$ and the option market price $C_i$, we used the {\it Mean Squared Error}, MSE,
$$
MSE(Model)= \frac{1}{N}\sum_{i=1}^{N}(C^{{{Model}}}(K_i)-C_i)^2
$$
To calibrate the Heston SV model, we used the {\tt optim($\cdot$)} function of R, to minimize $MSE(Heston)$  over the  model's parameter , $\vartheta=(\kappa, \theta, \eta, \rho)$ with the initial 
values of $(2, 0.5, 0.6, 0)$ and with $V_0=0.25$. The results of the calibrated values are 
$$
\hat \vartheta =(1.38164142,\ 1.06637168,\ 1.72832698,\ 0.07768964).  
$$

\begin{figure}[h] 
  \centering
  \includegraphics[width=7.0in,height=4.5in,keepaspectratio]{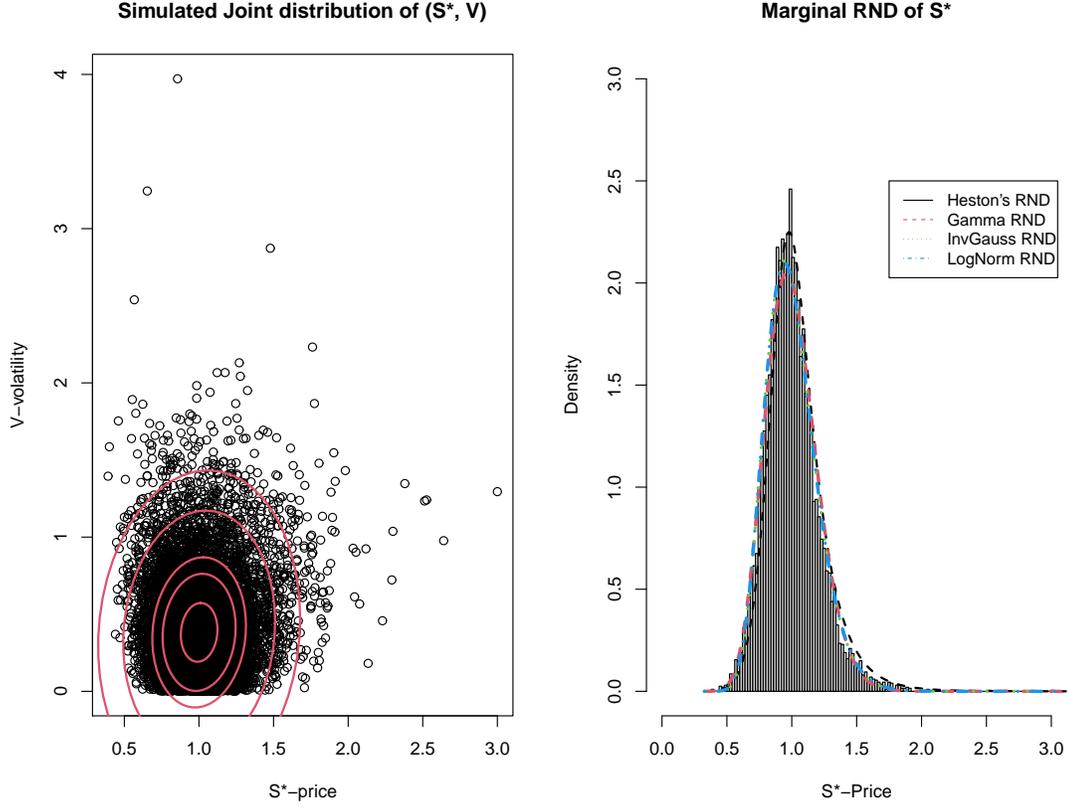}
  \caption[Figure 4]{\small{\it Simulated joint  $(S^*, V)$-distribution,  and Heston's, Gamma, Log-Normal and Inv. Gaussian RNDs  calculated based on the {\bf AMD} data of Table 2. }}
  \label{fig:fig1}
\end{figure}

This calibrated parameter, $\hat \vartheta $, was  then used to calculate, using Heston's characteristic function, the option prices according to Heston's SV model (\ref{5}).  These values are displayed  in Table 2, along with the actual market prices. Next, we obtained, as in the previous examples, a Monte-Carlo sample of $(S^*,\ V)$ whose results are displayed in Figure 4. The mean and standard deviation of these simulated stock prices are $\bar S^*=1.001237$ and $sd(S^*)=0.2079399$, respectively. As can be seen, the implied Heston's RND is, as was expected, {\bf positively skewed} ($sk=1.027201$). Accordingly, we considered those distribution from Table 1, as possible RND candidates in this situation.

\begin{figure}[h] 
  \centering
  \includegraphics[width=4.5in,height=3.5in,keepaspectratio]{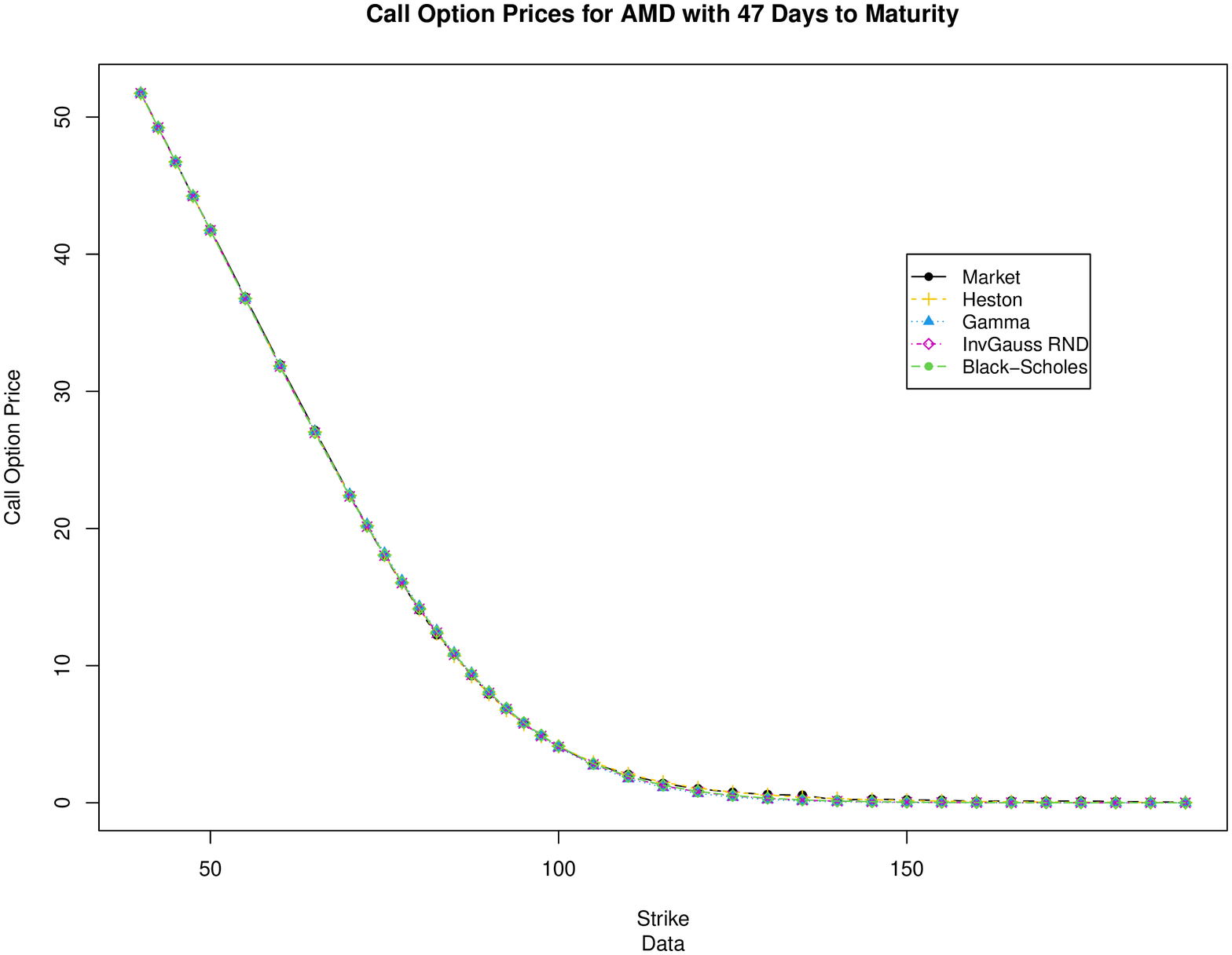}
  \caption[Figure 5]{\small{\it Comparing the the option prices obtained by the Heston, Gamma, Inverse Gaussian and the Black-Scholes Models for  the February 19, 2021 Option Series of {\tt  AMD} (with 47 DTE) as was quoted on the closing December 31, 2020 business day }}
  \label{fig:fig1}
\end{figure}

Since in this case, we have available the actual market option prices, we may estimate the parameter $\nu=\sigma\sqrt{t}$, defining these distribution directly, using  the `standard' Black-Scholes implied volatility, namely $\hat \nu=IV^{BS} \sqrt{t}$. This  entails using the the {\tt optim($\cdot$)} function again to minimize the $MSE(BS)$ with respect to the single parameter $\sigma$. This standard estimation procedure  yielded $IV^{BS}=0.550085$, so that $\hat\nu=0.1978301$. With this value at hand, we added to Figure 4b the graphs of the {\it Gamma}, {\it Inverse Gaussian} and the {\it Log-Normal} RND, as in Table 1. The extent of their agreement with the Heston's implied RND is self-evident. To further demonstrate that very point, we calculated the option prices under each one of these  modeled RND, and calculated  the corresponding  $MSE(Model)$. The results of this comparison are presented, side-by-side in Table 2.   In  Figure 5 we display the option price curve for each of these pricing models-- they are `virtually' almost identical in this example.

\begin{table}[t]
\begin{center}{\footnotesize
\caption{\small Comparing the the option prices obtained by the Heston, Gamma, Inverse Gaussian and the Black-Scholes Models for  the February 19, 2021 Option Series of {\tt  AMD} (with 47 DTE) as was quoted on the closing December 31, 2020 business day.}

\begin{tabular}{ccccccccccccc}
\hline
   & {\bf MSE} & {\bf 0.004410 } & {\bf 0.032725} &  {\bf 0.018126} & {\bf 0.016748 } \  \\ \hline

\hline
 Strike & Market Price & Heston &  Gamma & InvGaussan & Black.Scholes\\
\hline
40.0 & 51.775 & 51.720 & 51.738 & 51.737 & 51.737\\
\hline
42.5 & 49.275 & 49.222 & 49.239 & 49.238 & 49.238\\
\hline
45.0 & 46.775 & 46.726 & 46.741 & 46.739 & 46.739\\
\hline
47.5 & 44.200 & 44.231 & 44.244 & 44.240 & 44.240\\
\hline
50.0 & 41.825 & 41.741 & 41.751 & 41.742 & 41.743\\
\hline
55.0 & 36.875 & 36.779 & 36.783 & 36.758 & 36.762\\
\hline
60.0 & 31.950 & 31.869 & 31.871 & 31.816 & 31.824\\
\hline
65.0 & 27.150 & 27.058 & 27.073 & 26.977 & 26.993\\
\hline
70.0 & 22.450 & 22.423 & 22.476 & 22.339 & 22.365\\
\hline
72.5 & 20.200 & 20.203 & 20.285 & 20.132 & 20.164\\
\hline
75.0 & 17.975 & 18.070 & 18.184 & 18.022 & 18.059\\
\hline
77.5 & 16.025 & 16.039 & 16.186 & 16.022 & 16.064\\
\hline
80.0 & 14.050 & 14.127 & 14.302 & 14.145 & 14.190\\
\hline
82.5 & 12.250 & 12.349 & 12.544 & 12.399 & 12.449\\
\hline
85.0 & 10.800 & 10.719 & 10.917 & 10.793 & 10.846\\
\hline
87.5 & 9.275 & 9.242 & 9.428 & 9.330 & 9.385\\
\hline
90.0 & 7.925 & 7.923 & 8.077 & 8.010 & 8.066\\
\hline
92.5 & 6.850 & 6.760 & 6.866 & 6.830 & 6.888\\
\hline
95.0 & 5.800 & 5.746 & 5.790 & 5.786 & 5.843\\
\hline
97.5 & 4.925 & 4.871 & 4.844 & 4.870 & 4.927\\
\hline
100.0 & 4.100 & 4.120 & 4.021 & 4.073 & 4.129\\
\hline
105.0 & 2.835 & 2.939 & 2.706 & 2.799 & 2.852\\
\hline
110.0 & 2.065 & 2.096 & 1.766 & 1.880 & 1.928\\
\hline
115.0 & 1.410 & 1.498 & 1.119 & 1.237 & 1.278\\
\hline
120.0 & 1.025 & 1.075 & 0.688 & 0.798 & 0.833\\
\hline
125.0 & 0.765 & 0.776 & 0.412 & 0.506 & 0.535\\
\hline
130.0 & 0.605 & 0.563 & 0.240 & 0.315 & 0.338\\
\hline
135.0 & 0.550 & 0.411 & 0.136 & 0.194 & 0.211\\
\hline
140.0 & 0.205 & 0.302 & 0.076 & 0.117 & 0.131\\
\hline
145.0 & 0.265 & 0.224 & 0.041 & 0.070 & 0.080\\
\hline
150.0 & 0.215 & 0.167 & 0.022 & 0.042 & 0.049\\
\hline
155.0 & 0.185 & 0.125 & 0.011 & 0.024 & 0.029\\
\hline
160.0 & 0.110 & 0.094 & 0.006 & 0.014 & 0.017\\
\hline
165.0 & 0.135 & 0.072 & 0.003 & 0.008 & 0.010\\
\hline
170.0 & 0.120 & 0.055 & 0.001 & 0.005 & 0.006\\
\hline
175.0 & 0.135 & 0.042 & 0.001 & 0.003 & 0.004\\
\hline
180.0 & 0.095 & 0.032 & 0.000 & 0.001 & 0.002\\
\hline
185.0 & 0.070 & 0.025 & 0.000 & 0.001 & 0.001\\
\hline
190.0 & 0.040 & 0.020 & 0.000 & 0.000 & 0.001\\
\hline
\end{tabular}
}
\end{center}
\vskip -10pt
\small{\small Source: Yahoo Financial: \url{www.https://finance.yahoo.com/}
} 
\end{table}

\section{Appendix}
Heston (1993) provided (semi) closed form expressions to the probabilities $P_1$ and $P_2$ that comprise the solution $C_S(K)$ in (\ref{5}) to the option valuation under the stochastic volatility model (\ref{1}). Starting from a `guess' of the Black-Sholes style solution, 
\be\label{27}
C= SP_{1}-Ke^{-rt}P_{2},
\ee
he has shown that with $x:=\log{(S)}$,  this solution must satisfy the SDE resulting from the SV model in (\ref{1}), 
$$
\frac{\partial P_{j}}{\partial t}= \frac{1}{2}v\frac{\partial^{2}P_{j}}{\partial x^{2}}+\rho \eta v \frac{\partial^{2}P_{j}}{\partial x \partial v}+\frac{1}{2} \eta^{2} v \frac{\partial^{2} P_{j}}{\partial v^{2}}+ (r+u_{j}v)\frac{\partial P_{j}}{\partial x}+(a-b_{j}v)\frac{\partial P_{j}}{\partial v}, 
$$
for $j=1,2$, where $u_{1}=1/2 ,\ \ u_{2}=-1/2, \ \ b_{1}=\kappa - \rho \eta, \ \ b_{2}= \kappa $ and $a=\kappa\theta$. 
These closed form expressions are given by
\be\label{28a}
P_j=\frac{1}{2}+\frac{1}{\pi}\int_{0}^\infty{\cal R}\text{e}\left[ \frac{e^{-i\omega k}\psi_j(\omega, t, v, x)}{i\omega}\right]d\omega,
\ee
where  $k:=\log{(K)}$ and $\psi_j(\cdot)$ is the  characteristics function 
\be\label{28}
\psi_j(\omega, t, v, x):=\int_{-\infty}^\infty e^{i \omega s}p_j(s)ds\equiv e^{B_j(\omega, t)+D_j(\omega, t)v+i \omega x+ i\omega \,  rt},
\ee
where $p_j(\cdot)$ is the $pdf$ of $s_{T}=\log(S_T)$ corresponding to the probability $P_j, \ j=1, 2$ and 
$$
B_j(\omega, t)= \frac{\kappa \theta}{\eta^{2}}\{(b_{j} +d_j- i \omega \rho \eta)t -2\log(\frac{1- g_j e^{d_jt}}{1-g_j})\}
$$
$$
D_j(\omega, t)= \frac{b_{j} +d_j -i\omega \rho \eta}{\eta^{2}}(\frac{1-e^{d_jt}}{1-g_je^{d_jt}})
$$
$$
g_j=\frac{b_{j}-i\omega\, \rho \eta  +d_j}{b_{j}-i\omega\rho \eta  -d_j}
$$

$$
d_j= \sqrt{( i\omega \rho \eta - b_{j})^{2}- \eta^{2}(2 i\omega u_{j} -\omega^{2})}
$$
We point out that $d_j$ above is taken to be the positive root of the Riccati equation involving $D_j$.  However using instead the negative root, namely $d_j^{\prime}=-d_j$, was shown to provide an equivalent, but yet more stable solution for $\psi_j$ above -see Albrecher, Mayer,  Schoutens,   and Tistaer,  (2007) for more details on this so-called ``Heston Trap''.  In either case, efficient numerical routines such as the  
{\tt cfHeston} and {\tt callHestoncf} functions of the NMOF package of R, are readily available to accurately compute the values of $\psi_j$ and hence of $P_j$ and the call option values, for given $t, s$ and $v$ and any choice of $\vartheta=(\kappa, \theta, \eta, \rho)$.

 Now, having established (\ref{28}), the standard application of the Fourier transform provides (see for example Schmelzle (2010)) that the $pdf$ $p_j(\cdot)$ of $s_{T}=\log(S_T)$, can be obtained, for $s\in {\Bbb R}$, as
\be\label{29}
p_j(s)=\frac{1}{\pi}\int_{0}^\infty{\cal R}\text{e}\left[ {e^{-i\omega s}\psi_j(\omega, t, v, x)}\right]d\omega. 
\ee
Hence,  it follows immediately that the $pdf$ $\tilde p_j(\cdot)$ of $S_T$ is given by
$$
\begin{aligned}
& \tilde p_j(u)=  \frac{1}{u}\times p_j(\log (u))\equiv \\
 & \frac{1}{\pi}\int_{0}^\infty{\cal R}\text{e}\left[  \frac{e^{-i\omega \log(u)}\psi_j(\omega, t, v, x)}{u}\right]d\omega, \qquad u>0.
\end{aligned} 
$$
Further, since the characteristic functions $\psi_j$ in (\ref{28}) are affine in $x+rt= \log(S)+rt\equiv \log(\mu)$, where as in Corollary \ref{C1}, $\mu=Se^{rt}$, we may rewrite $\tilde p_j(u)$ above as
\be\label{30}
 \tilde p_j(u)  = \frac{1}{\mu\pi}\int_{0}^\infty{\cal R}\text{e}\left[  \frac{e^{-i\omega \log(u/\mu)}\tilde \psi_j(\omega, t, v)}{u/\mu}\right]d\omega, 
\ee
where 
$$
\log(\tilde \psi_j(\omega, t, v)):= \log(\psi_j(\omega, t, v, x)-i \omega x- i\omega \,  rt.
$$
We point out that in light of (\ref{7a})  that $\tilde p_2(\cdot)$ in (\ref{30}) is the RND (under $\Q$) for the Heston's (1993) model and can similarly be easily evaluated numerically along-side of evaluating $P_2$.  Indeed we have, 
$$
P_2\equiv \int_{K}^\infty \tilde p_2(u)du =\Q(S_T>K).
$$
It should be noticed  from  expression  (\ref{30})  that any  RND, $\tilde p_2(\cdot)$ of the Heston Model, and the corresponding risk neutral distribution $Q_\mu(\cdot)$ of $S_T$, constitutes a scale-family of distributions in $\mu=Se^{rt}$, so that it satisfies the terms of {\it Assumption A}.  This  assertion is summarized in Proposition $\ref{P1}$ below. 

\begin{proposition}\label{P1} Let $q_\mu(\cdot)$ be any RND with a corresponding risk neutral distribution  $Q_\mu(\cdot)$ that satisfies Heston's solution in (\ref{5}),  with $\mu=S \, e^{rt}$ then $q_\mu(\cdot)$ is 
of  the form given in  (\ref{30}) and therefore $Q_\mu(\cdot)$ must be a member of a scale-family of distributions in $\mu$.  
\end{proposition}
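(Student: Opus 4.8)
The plan is to show that the full strike profile of Heston's call prices determines the RND uniquely, and then to observe that the Heston density written out in (\ref{30}) manifestly carries the scale structure required by {\it Assumption A}. Throughout, the structural parameters $\vartheta=(\kappa,\theta,\eta,\rho)$, the horizon $t$, and the initial variance $v=V_0$ are held fixed.

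First I would invoke the identification already recorded in Section 1. Suppose $q_\mu$ is an RND satisfying (\ref{6})--(\ref{7}) with $\mu=Se^{rt}$, and suppose the call prices it induces through (\ref{7}) coincide with Heston's solution (\ref{5}) for every strike $K>0$. Expanding (\ref{7}) as in (\ref{7a}) and matching the two expressions term by term (exactly as in the lines preceding (\ref{7b})) forces
$$1-Q_\mu(K)=P_2(K)\qquad\text{for all }K>0,$$
where $P_2$ is the Heston probability given in (\ref{28a}).

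Next I would bring in the Appendix. By (\ref{28a})--(\ref{30}), the Heston probability $P_2$ is the survival function of the density $\tilde p_2$ of (\ref{30}), i.e.\ $P_2(K)=\int_K^\infty\tilde p_2(u)\,du$. Combining this with the previous display gives $1-Q_\mu(K)=\int_K^\infty\tilde p_2(u)\,du$ for all $K>0$; differentiating in $K$ yields $q_\mu(K)=\tilde p_2(K)$ for (almost) every $K>0$, so the RND must be exactly the density (\ref{30}). To finish, I would read the scale structure off (\ref{30}) itself: since $\tilde\psi_2(\omega,t,v)=e^{B_2(\omega,t)+D_2(\omega,t)v}$ is obtained from $\psi_2$ by deleting the factor $e^{i\omega x+i\omega rt}=e^{i\omega\log\mu}$, it has no dependence on $S$ (equivalently on $\mu$). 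Setting $q_1(x):=\tfrac{1}{\pi}\int_0^\infty{\cal R}\text{e}\bigl[e^{-i\omega\log x}\,\tilde\psi_2(\omega,t,v)/x\bigr]\,d\omega$, equation (\ref{30}) then reads $\tilde p_2(u)=\mu^{-1}q_1(u/\mu)$, which is precisely the defining relation of a scale-family with scale parameter $\mu$; the normalization $\int_0^\infty xq_1(x)\,dx=1$ comes from the martingale condition (\ref{6}), and finiteness of the second moment from the dynamics (\ref{1}). Hence $Q_\mu$ belongs to a scale-family in $\mu$, as claimed.

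The only step that is more than bookkeeping is the passage ``$1-Q_\mu=\int_{\,\cdot}^{\infty}\tilde p_2\ \Rightarrow\ q_\mu=\tilde p_2$ a.e.'', together with the justification for differentiating the oscillatory integral in (\ref{30}) under the integral sign. Both are standard — the first is a Breeden--Litzenberger-type uniqueness fact, and the two go through once one notes that $P_2$ is a bona fide absolutely continuous distribution function — so I expect this to be the main, and only mild, technical obstacle.
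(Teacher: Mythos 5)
Your proposal is correct and follows essentially the same route as the paper, which argues Proposition \ref{P1} only informally in the discussion surrounding (\ref{30}): identify $1-Q_\mu(K)$ with Heston's $P_2$ by matching (\ref{5}) against (\ref{7a}), recognize $P_2$ as the survival function of the Fourier-inverted density $\tilde p_2$, and read the scale-family structure off (\ref{30}) from the fact that $\tilde\psi_2$ carries no dependence on $x+rt=\log\mu$. Your write-up merely makes explicit the differentiation step and the moment normalizations that the paper leaves tacit.
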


The result stated in the next claim is known, but the details are instructive to proving (\ref{9a}). 

\begin{claim} Let $\Delta(K)=\partial C/\partial S$ as  in (\ref{9}), then for the Heston solution (\ref{5})  (or (\ref{27})) with $P_j$, $j=1, 2$  as are given in (\ref{28a}), we have $\Delta(K)=P_1$.
\end{claim}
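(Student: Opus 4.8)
\noindent The plan is to differentiate Heston's formula $C=S\,P_{1}-Ke^{-rt}P_{2}$ directly in $S$ and show that the $S$-derivatives of $P_{1}$ and $P_{2}$ cancel, leaving exactly $P_{1}$. Since $K,r,t$ do not depend on $S$,
\[
\Delta(K)=\frac{\partial C}{\partial S}=P_{1}+S\,\frac{\partial P_{1}}{\partial S}-Ke^{-rt}\,\frac{\partial P_{2}}{\partial S},
\]
so the claim is equivalent to the identity $S\,\partial_{S}P_{1}=Ke^{-rt}\,\partial_{S}P_{2}$.

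First I would compute $\partial_{S}P_{j}$ from the representation (\ref{28a})--(\ref{28}). The spot $S$ enters $\psi_{j}(\omega,t,v,x)$ only through $x=\log S$, and only via the factor $e^{i\omega x}$, so $\partial_{S}\psi_{j}=(i\omega/S)\psi_{j}$; this cancels the $1/(i\omega)$ in (\ref{28a}). Differentiating under the integral sign (legitimate because both the Heston integrand and its $S$-derivative decay exponentially in $\omega$), and then recognising (\ref{29}) evaluated at $s=\log K$ together with $\tilde p_{j}(u)=p_{j}(\log u)/u$, one obtains
\[
S\,\frac{\partial P_{j}}{\partial S}=\frac{1}{\pi}\int_{0}^{\infty}{\cal R}\text{e}\left[e^{-i\omega\log K}\,\psi_{j}(\omega,t,v,x)\right]d\omega=p_{j}(\log K)=K\,\tilde p_{j}(K).
\]
Substituting back gives $\Delta(K)=P_{1}+K\tilde p_{1}(K)-(K^{2}e^{-rt}/S)\,\tilde p_{2}(K)$, so it remains to prove $\tilde p_{1}(K)=(Ke^{-rt}/S)\,\tilde p_{2}(K)=(K/\mu)\,\tilde p_{2}(K)$ with $\mu=Se^{rt}$.

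This last relation between the two Heston densities is exactly what (\ref{7b}) encodes: since $P_{1}=\int_{K}^{\infty}(u/\mu)\,\tilde p_{2}(u)\,du$ while also $P_{1}=\int_{K}^{\infty}\tilde p_{1}(u)\,du$, differentiating both in $K$ yields $\tilde p_{1}(K)=(K/\mu)\,\tilde p_{2}(K)$, which closes the argument. To keep the proof self-contained about Heston's formulas I would also (or instead) verify the equivalent Fourier-space statement $\psi_{1}(\omega)=\mu^{-1}\psi_{2}(\omega-i)$: because $x+rt=\log\mu$, this reduces to the algebraic identities $B_{1}(\omega,t)=B_{2}(\omega-i,t)$ and $D_{1}(\omega,t)=D_{2}(\omega-i,t)$, which in turn follow from the elementary facts $b_{1}-i\omega\rho\eta=b_{2}-i(\omega-i)\rho\eta$ and $d_{1}(\omega)^{2}=d_{2}(\omega-i)^{2}$ (a one-line check using $u_{1}=\tfrac12$, $u_{2}=-\tfrac12$), hence $d_{1}(\omega)=d_{2}(\omega-i)$ for a consistent branch of the root, and therefore $g_{1}(\omega)=g_{2}(\omega-i)$.

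I expect the only genuine subtleties to be bookkeeping: justifying the interchange of $\partial_{S}$ with $\int_{0}^{\infty}$, confirming that the inversion integral (\ref{29}) returns $p_{j}$ at the single point $\log K$ (continuity of the Heston density is enough), and keeping the branch of $d_{j}$ consistent on the two sides of $d_{1}(\omega)=d_{2}(\omega-i)$ (the ``Heston trap'' — harmless here since $d_{1}(\omega)^{2}=d_{2}(\omega-i)^{2}$ as an identity). I note in passing that, granting Proposition \ref{P1}, there is an even shorter route: the Heston RND $q_{\mu}=\tilde p_{2}$ is a scale family in $\mu$, so $C=e^{-rt}c_{\mu}(K)$ and, since $\partial_{S}=e^{rt}\partial_{\mu}$, $\Delta(K)=\partial_{S}\!\left[e^{-rt}c_{\mu}(K)\right]=\partial_{\mu}c_{\mu}(K)=\Delta_{\mu}(K)=\mu^{-1}\int_{K}^{\infty}u\,q_{\mu}(u)\,du=P_{1}$ by Theorem \ref{T1} and (\ref{7b}); but the direct differentiation above is the ``instructive'' argument referred to in the text.
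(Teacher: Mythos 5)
Your proposal is correct and follows essentially the same route as the paper's proof: differentiate $C=SP_1-Ke^{-rt}P_2$ in $S$, use $\partial_S\psi_j=(i\omega/S)\psi_j$ together with the inversion formula (\ref{29}) to identify $S\,\partial_S P_j$ with $p_j(\log K)$, and then invoke (\ref{7b}) to get $\tilde p_1(K)=(K/\mu)\tilde p_2(K)$ so the two derivative terms cancel. The additional Fourier-space check $\psi_1(\omega)=\mu^{-1}\psi_2(\omega-i)$ and the shortcut via Proposition \ref{P1} are correct but go beyond what the paper records.
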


\begin{proof} By (\ref{27}) we have
\be\label{32}
\frac{\partial C}{\partial S}= P_1+S\frac{\partial P_1}{\partial S}- Ke^{-rt}\frac{\partial P_2}{\partial S}. 
\ee
Since we have $x=\log(S)$, it follows from (\ref{28a}) that, 
$$
\frac{\partial P_j}{\partial S} = \frac{1}{\pi}\int_{0}^\infty{\cal R}\text{e}\left[ \frac{e^{-i\omega k}\psi_j(\omega, t, v, x)}{S}\right]d\omega. 
$$
Hence, by (\ref{29}) we have with $\mu=Se^{rt}$, 
$$
S\frac{\partial P_j}{\partial S} \equiv p_1(k), \qquad \text{and} \qquad Ke^{-rt}\frac{\partial P_2}{\partial S}\equiv \frac{K}{\mu}p_2(k), 
$$
which by (\ref{7b}) implies that $S{\partial P_1}/{\partial S}- Ke^{-rt}{\partial P_2}/{\partial S}=0$  in (\ref{32}). 
\end{proof}

\end{document}